\documentclass{llncs}
\usepackage{amsmath}
\usepackage{amssymb}
\usepackage{url}
\usepackage{enumerate}
\usepackage{boxedminipage}

\pagestyle{plain}

\begin{document}

\title{Parameterized Study of the Test Cover Problem}
%\title{Four Parameterizations of Test Cover Problem}

\author{R. Crowston\inst{1} \and G. Gutin\inst{1} \and M. Jones\inst{1} \and  S. Saurabh\inst{2}  \and  A. Yeo\inst{3}}

\institute{Royal Holloway, University of London\\
Egham TW20 0EX, UK, \email{\{robert|gutin|markj\}@cs.rhul.ac.uk}
\and The Institute of Mathematical Sciences\\
Chennai 600 113, India, \email{saket@imsc.res.in}
\and University of Johannesburg\\
Auckland Park, 2006 South Africa, \email{anders.yeo.work@gmail.com}
}
\date{}
\maketitle

\begin{abstract}
In this paper we carry out a systematic study of a natural covering problem, used for identification
across several areas, in the realm of parameterized complexity. In the {\sc Test Cover} problem
we are given a set  $[n]=\{1,\ldots,n\}$ of items together with a collection, $\cal T$, of distinct subsets
of these items called tests. We assume that $\cal T$ is a test cover, i.e., for
each pair of items there is a test in $\cal T$ containing exactly one of these items.
The objective is to find a minimum size subcollection of $\cal T$, which is still a test cover.
The generic parameterized version of
{\sc Test Cover} is denoted by $p(k,n,|{\cal T}|)$-{\sc Test Cover}. Here, we are given  $([n],\cal{T})$ and a positive integer parameter
$k$ as input and the objective is to decide whether there is a test cover of size at most $p(k,n,|{\cal T}|)$.
We study four parameterizations for {\sc Test Cover} and obtain the following:

(a) $k$-{\sc Test Cover}, and $(n-k)$-{\sc Test Cover} are fixed-parameter tractable (FPT), i.e., these problems
 can be solved by algorithms of runtime  $f(k)\cdot poly(n,|{\cal T}|)$, where $f(k)$ is a function of $k$ only.

(b) $(|{\cal T}|-k)$-{\sc Test Cover} and $(\log n+k)$-{\sc Test Cover} are W[1]-hard.
Thus, it is unlikely that these problems are FPT.

%Our main result is fixed parameter tractability of $(n-k)$-{\sc Test Cover}. Here we utilize a ``miniaturized" version of $(n-k)$-{\sc Test Cover}, where we decide on the existence of a $k$-mini test cover, i.e., a subcollection ${\cal T}'$ of $\cal T$ of cardinality at most $2k$ satisfying the following property: there is a partition of $[n]$ into at least $\ell=|{\cal T}'|+k$ subsets $C_1,\ldots ,C_{\ell}$ such that
%% for every pair $i\neq j\in [n]$ there are indices $p\neq q\in [\ell]$ for which $i\in C_p$ and $j\in C_q.$
%for every pair $p\neq q\in [\ell]$, there exists $S \in {\cal T}'$ such that either $C_p \subseteq S \subseteq [n] \backslash C_q$ or $C_q \subseteq S \subseteq [n] \backslash C_p$.
%The answer to $(n-k)$-{\sc Test Cover} is Yes if and only if there is a $k$-mini test cover.
\end{abstract}

\section{Introduction}\label{sec:i}

The input to the {\sc Test Cover} problem consists of a set of items, $[n]:=\{1,2, \ldots ,n\}$,
and a collection of distinct sets, ${\cal T}=\{T_1,\ldots , T_m\}$, called {\em tests}. We say that a test
$T_q$ {\em separates} a pair $i,j$ of items if $|\{i,j\}\cap T_q|=1.$ Subcollection ${\cal T}'\subseteq {\cal T}$
is a {\em test cover} if each pair $i,j$ of distinct items is separated by a test in ${\cal T}'$.
The objective is to find a test cover of minimum size, if one exists. Since it is easy to decide,
in polynomial time, whether the collection ${\cal T}$ itself is a test cover,
henceforth we will assume that ${\cal T}$ is a test cover.

{\sc Test Cover} arises naturally in the following general setting of
identification problems: Given a set of items and a set
of binary attributes that may or may not occur in each item, the aim
is to find the minimum size subset of attributes (corresponding to a minimum test cover) such that each
item can be uniquely identified from the information on which of this subset
of attributes it contains.
%Thus, the incidence vector of any item with
%the test collection is a unique binary signature for it distinguishing it from other
%items, and so uniquely identifying it from the list of items.
%We provide a formal statement of this formulation of {\sc Test Cover} in the next section. We will call it the {\em vector formulation}.
{\sc Test Cover} arises in fault analysis, medical diagnostics, pattern recognition, and
biological  identification (see, e.g., \cite{HalHalRav01,HalMinRav01,MorSha85}).

The {\sc Test Cover} problem has been also studied extensively from an algorithmic view point.
The problem is NP-hard, as was shown by Garey and Johnson \cite{GarJoh79}.
Moreover,  {\sc Test Cover} is APX-hard \cite{HalHalRav01}. There is an $O(\log n)$-approximation algorithm for
the problem \cite{MorSha85} and there is no $o(\log n)$-approximation algorithm unless P=NP \cite{HalHalRav01}.
These approximation results are obtained using reductions from {\sc Test Cover} to the well-studied {\sc Set Cover}
problem, where given a collection ${\cal S}$ of subsets of $[n]$ {\em covering} $[n]$ (i.e., $\cup_{X\in \cal S}X=[n]$)
and integer $t$, we are to decide whether there is a subcollection of $\cal S$ of size $t$ covering $[n]$.

In this paper we carry out a systematic study of  {\sc Test Cover} in the realm of
parameterized complexity\footnote{Basic notions on parameterized complexity are given in the end of this section.}.
%Parameterized versions of {\sc Test Cover} are obtained by
%choosing appropriate parameters that govern the running time and the size of the test cover.
%Basics on parameterized complexity are given in the next section.
The following will be a generic parameterization of the problem:

\begin{center}
\fbox{~\begin{minipage}{0.9\textwidth}
$p(k,n,m)$-{\sc Test Cover}\\ \nopagebreak
  \emph{Instance:} A set ${\cal T}$ of $m$ tests on $[n]$ such that $\cal T$ is a test cover.\\
    \nopagebreak
  \emph{Parameter:} $k$.\\ \nopagebreak
  \emph{Question:} Does ${\cal T}$ have a test cover with at most $p(k,n,m)$ tests?
\end{minipage}~}
\end{center}

We will consider four parameterizations of {\sc Test Cover}: $k$-{\sc Test Cover},  $(m-k)$-{\sc Test Cover}, $(n-k)$-{\sc Test Cover}, and $(\log n+k)$-{\sc Test Cover}. The first parameterization is standard; its complexity is not hard to establish
 since it is well-known
%, and easy to prove using the vector formulation,
that there is no test cover of size less than $\lceil \log n\rceil$ \cite{HalHalRav01} and the bound is tight. This  bound suggests the parameterization $(\log n+k)$-{\sc Test Cover} (above a tight lower bound). The parameterization $(m-k)$-{\sc Test Cover} is a natural parameterization below a tight upper bound.
There is always a test cover of size at most $n-1$ \cite{Bon72} and ${\cal T}=\{\{1\},\ldots ,\{n-1\}\}$ shows that the bound is tight. Thus, $(n-k)$-{\sc Test Cover} is  another parameterization below a tight upper bound.

In this paper, we will use some special cases of the following generic parameterization of {\sc Set Cover}:

\begin{center}
\fbox{~\begin{minipage}{0.9\textwidth}
$p(k,n,m)$-{\sc Set Cover}\\ \nopagebreak
  \emph{Instance:} A collection ${\cal S}$ of $m$ subsets of $[n]$ covering $[n]$.\\
    \nopagebreak
  \emph{Parameter:} $k$.\\ \nopagebreak
  \emph{Question:} Does ${\cal S}$ contain a subcollection  of size $p(k,n,m)$ covering $[n]$?
\end{minipage}~}
\end{center}

Three of our parameterizations for {\sc Test Cover} are below or above guaranteed lower or upper bounds.
The study of parameterized problems above a guaranteed lower/upper bound was initiated by Mahajan and Raman~\cite{MahajanR99}.
They showed that some above guarantee versions of {\sc Max Cut} and {\sc Max Sat} are FPT; in the case of {\sc Max Sat}
the input is a CNF formula with $m$ clauses together with an integer $k$ (the parameter)
and the question is whether there exists an assignment that satisfies at least $m/2+k$ clauses.
 Later, Mahajan et al.~\cite{MahajanRS09} published  a paper with several new
results and open problems around parameterizations beyond guaranteed lower and upper bounds. In a breakthrough paper Gutin et al.~\cite{GutinKSY11}
developed a probabilistic approach to problems parameterized above or below tight bounds. Alon et al.~\cite{AlonGKSY10} combined this approach
with a method from Fourier analysis to obtain an FPT algorithm for parameterized {\sc Max $r$-SAT} beyond
the guaranteed lower bound. In the same paper a quadratic kernel was also given for {\sc Max $r$-SAT}.
Other significant results in this direction include quadratic kernels for ternary
permutation constraint satisfaction problems parameterized above average and results on systems of linear equations
over the field of two elements~\cite{CrowstonGJY11,CrowstonGJKR10,GutinIMY10}.

We establish parameterized complexity of all four parameterizations of {\sc Test Cover}:
% \begin{itemize}
%   \item

 (i) Since there is no test cover of size less than $\lceil \log n\rceil$, $k$-{\sc Test Cover} is FPT: if $k< \log n$, the answer for $k$-{\sc Test Cover} is {\sc No} and, otherwise, $n\le2^k$ and $m\le 2^n\le2^{2^k}$ and so we can solve $k$-{\sc Test Cover} by brute force in time dependent only on $k.$
%     \item

(ii) In Section \ref{sec:m-k}, we provide a polynomial-time reduction from the {\sc Independent Set} problem to $(m-k)$-{\sc Test Cover} to show that $(m-k)$-{\sc Test Cover} is W[1]-hard. A reduction from $(m-k)$-{\sc Set Cover} to $(m-k)$-{\sc Test Cover} and a result from \cite{GutJonYeo11} allows us to conclude that $(m-k)$-{\sc Test Cover} is W[1]-complete. Thus, $(m-k)$-{\sc Test Cover} is not fixed-parameter tractable unless FPT=W[1].
%   \item

(iii) In Section \ref{sec:n0k}, we prove the main result of this paper: $(n-k)$-{\sc Test Cover} is FPT.
  The proof is quite nontrivial and utilizes a ``miniaturized" version of $(n-k)$-{\sc Test Cover} introduced and studied in Subsection \ref{sec:mini}.
% %   where $f(k)$ is an impractically fast-growing function.
% %   However, it is worth noting that the first algorithm showing fixed-parameter tractability of many problems was initially impractical as well
% %   and subsequent improvements brought the complexity down such that the current algorithms for those problems have relatively slow
% %   growing functions $f(k)$ and, thus, the current algorithms can be used in practice.

%   \item
(iv) Moret and Shapiro \cite{MorSha85} obtained a polynomial-time reduction from {\sc Set Cover} to {\sc Test Cover} such that the
{\sc Set Cover} problem has a solution of size $k$ if and only if its reduction to {\sc Test Cover} has a solution of size
$k+\lceil \log n\rceil.$ Since $k$-{\sc Set Cover} is W[2]-complete \cite{DowneyFellows99},
we conclude that $(\log n+k)$-{\sc Test Cover} is W[2]-hard. Thus, $(\log n+k)$-{\sc Test Cover} is not fixed-parameter tractable provided FPT$\neq$ W[2].\\

% \end{itemize}
% %To obtain our algorithm for $(n-k)$-{\sc Test Cover} we introduce a notion of equivalence on $[n]$ induced by a subcollection of $\cal T$. This
% %notion seems very useful not only for our algorithm. We illustrate it in the beginning of Section \ref{sec:n0k} by giving a short proof of the result of Bondy
% %\cite{Bon72} that there is always a test cover of size at most $n-1$.
%
% %We complete this section by giving two other well-known formulations of {\sc Test Cover} (see, e.g., \cite{HalHalRav01}):
%
% %\begin{description}
% %  \item[Graph Formulation.] Given a set $\cal C$ of cuts in the complete graph $K_n$ such that
% %  every edge of $K_n$ belongs to a cut in ${\cal C}$,
% %  find a subset ${\cal C}'$ of $\cal C$ of minimum size such that
% %  every edge of $K_n$ belongs to a cut in ${\cal C}'$.
% %
% %  \item[Vector Formulation.] Given a collection $\cal S$ of subsets of $[n]$, find a subcollection ${\cal S}'=\{S_1,\ldots ,S_t\}$ of $\cal S$ of
% %  minimum size such that the vectors $(\delta(S_1,i),\ldots ,\delta(S_t,i))$ are unique for all $i\in [n]$, where $\delta(S_j,i)=1$ if $i\in S_j$ and
% %  $\delta(S_j,i)=0,$ otherwise.
% %\end{description}

\noindent{\bf Basics on Parameterized Complexity.}
A parameterized problem $\Pi$ can be considered as a set of pairs
$(I,k)$, where $I$ is the \emph{problem instance} and $k$ (usually a nonnegative
integer) is the \emph{parameter}.  $\Pi$ is called
\emph{fixed-parameter tractable } if membership of $(I,k)$ in
$\Pi$ can be decided by an algorithm of runtime $O(f(k)|I|^c)$, where $|I|$ is the size
of $I$, $f(k)$ is an arbitrary function of the
parameter $k$ only, and $c$ is a constant
independent from $k$ and $I$. The class of fixed-parameter tractable problems is denoted by FPT.

When the decision time is replaced by the much more powerful $O(|I|^{f(k)}),$
we obtain the class XP, where each problem is polynomial-time solvable
for any fixed value of $k.$ There is an infinite number of parameterized complexity
classes between FPT and XP (for each integer $t\ge 1$, there is a class W[$t$]) and they form the following tower:
$FPT \subseteq W[1] \subseteq W[2] \subseteq \cdots \subseteq W[P] \subseteq XP.$
For the definition of classes W[$t$], see, e.g., \cite{DowneyFellows99,FlumGrohe06}. It is well-known that FPT$\neq$XP and it is widely believed
that already FPT$\neq$W[1]. Thus, by proving that a problem is W[1]-hard, we essentially rule out that the problem is
fixed-parameter tractable (subject to FPT$\neq$W[1]). For more information on parameterized complexity, see monographs \cite{DowneyFellows99,FlumGrohe06,Niedermeier06}.

\section{Complexity of $(m-k)$-{\sc Test Cover}}\label{sec:m-k}

In this section we give the hardness result for $(m-k)$-{\sc Test Cover}.
\begin{theorem}
$(m-k)$-{\sc Test Cover} is W[1]-complete.
\end{theorem}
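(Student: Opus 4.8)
The statement has two halves, and I would prove them separately: W[1]-hardness and membership in W[1].

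For hardness I would give a parameterized reduction from {\sc Independent Set} (parameterized by solution size, which is W[1]-complete) to $(m-k)$-{\sc Test Cover}. The guiding observation is that asking for a test cover of size at most $m-k$ is the same as asking whether we can \emph{delete} $k$ tests and still separate every pair; a set $R$ of tests is deletable exactly when, for every pair of items, not all tests separating that pair lie in $R$. The plan is to arrange the instance so that the deletable sets of tests are precisely the independent sets of the input graph $G=(V,E)$. To this end I would introduce one test $T_v$ for each vertex $v$, and for each edge $e=\{i,j\}$ a pair of items $a_e,b_e$ placed so that $a_e\in T_i$, $b_e\in T_j$, and neither item lies in any other $T_\ell$; then the pair $\{a_e,b_e\}$ is separated by exactly $T_i$ and $T_j$. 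Deleting $R$ leaves $\{a_e,b_e\}$ separated iff $R$ omits an endpoint of $e$, so among the vertex-tests a set is deletable iff it is an independent set of $G$.

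The remaining work, and the main obstacle, is to force every other pair of items to stay separated after any such deletion, without accidentally giving the pairs $\{a_e,b_e\}$ extra separators. I would do this with a small family of ``guard'' tests obtained from a binary encoding: assign each edge $e$ a distinct code and let both $a_e$ and $b_e$ carry that code, so that the guard tests separate two items exactly when they come from different edges and never separate $a_e$ from $b_e$. Adding for each guard a private pair of dummy items that only that guard separates makes every guard the unique separator of some pair, hence undeletable. One then checks that the only pairs whose separating tests are all vertex-tests are the edge-pairs $\{a_e,b_e\}$, so that a set of tests is deletable iff it is an independent set consisting of vertex-tests. Since there are $O(\log|E|)$ guards the construction is polynomial and maps the parameter $k$ to $k$, giving a test cover of size at most $m-k$ iff $G$ has an independent set of size $k$. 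Verifying all these separation conditions simultaneously is where the care is needed.

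For membership in W[1] I would use the standard correspondence between {\sc Test Cover} and {\sc Set Cover}: take the ground set to be the $\binom{n}{2}$ pairs of items and, for each test $T$, the set $S_T$ of pairs it separates. Then $\mathcal T$ is a test cover iff the sets $S_T$ cover the ground set, and there is a test cover of size at most $m-k$ iff there is a set cover of size at most $m-k$. Since the number of sets equals the number of tests $m$ and the deletion parameter $k$ is unchanged, this is a parameterized reduction from $(m-k)$-{\sc Test Cover} to $(m-k)$-{\sc Set Cover}. The result of \cite{GutJonYeo11} that $(m-k)$-{\sc Set Cover} lies in W[1], together with the closure of W[1] under parameterized reductions, then places $(m-k)$-{\sc Test Cover} in W[1]. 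Combining the two halves yields W[1]-completeness.
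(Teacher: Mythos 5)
Your overall architecture matches the paper's: W[1]-hardness by a reduction from {\sc Independent Set} in which each vertex contributes one test, each edge contributes a pair of items whose only separators are the two endpoint tests, and additional forced tests handle every other pair; membership in W[1] via the Moret--Shapiro reduction to {\sc Set Cover} on the $\binom{n}{2}$ pairs together with the result of \cite{GutJonYeo11} that $(m-k)$-{\sc Set Cover} is in W[1]. The membership half is essentially identical to the paper's.

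Where you differ is the auxiliary gadget, and that is exactly where your sketch has a hole. The paper uses items $e_i,e'_i$ for each edge $e_i$, puts $e_i$ in the tests of both its endpoints, and adds the $q-1$ tests $T'_i=\{e_i,e'_i\}$ for $i\in[q-1]$: each $T'_i$ is forced because it is the \emph{unique} separator of $e'_i$ from $e'_q$, and collectively the $T'_i$ separate every pair except the pairs $\{e_i,e'_i\}$, whose separators are precisely the two endpoint tests. This is linear-size, needs no dummies, and is verified in two lines. Your logarithmic family of binary-encoding guards buys nothing here (the reduction only needs to be polynomial) and costs you the verification you defer: as described, the two dummy items attached to a guard $g$ lie in no vertex test and agree on all guards other than $g$, so the ``outside'' dummy items of two different guards $g\neq g'$ belong to exactly the same tests and are separated by nothing. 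That violates the standing assumption that the full collection $\mathcal{T}$ is a test cover, so the constructed object is not even a valid instance; the dummy-versus-$a_e,b_e$ pairs need a similar check. This is repairable (give each dummy pair its own code, distinct from all edge codes and from other dummy codes in coordinates other than its designated guard), but as written the construction does not go through. The simplest fix is to drop the guard/dummy machinery and use the paper's one-forced-test-per-edge gadget.
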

\begin{proof}
We will give a reduction from the W[1]-hard $k$-{\sc Independent Set}  problem  to $(m-k)$-{\sc Test Cover}. An input to
$k$-{\sc Independent Set} consists of an undirected graph $G=(V,E)$ and a positive integer $k$ (the parameter) and the objective is to decide
whether there exists an independent set of size at least $k$ in $G$. A set $I\subseteq V$ is {\em independent} if
no edge of $G$ has both end-vertices in $I$.

Let $G$ be an input graph to $k$-{\sc Independent Set} with vertices $v_1,\ldots ,v_p$ and edges $e_1,\ldots ,e_q$. We
construct an instance of  $(m-k)$-{\sc Test Cover} as follows. The set of items is $\{e_i,e'_i:\ i\in [q]\}$
and the collection of tests is
$\{T_j:j\in~[p]\} \cup~\{T'_i : i\in [q-1] \}$, where
$T_j=\{e_i :\ v_j \in e_i\}$,
the set of of edges of $G$ incident to $v_j$, and
$T'_i=\{e_i,e'_i\}$.

A set $U$ of vertices of $G=(V,E)$ is a {\em vertex cover} if every edge of $G$ has at least one end-vertex in $U$. It is well-known and easy to see
that $U$ is a vertex cover if and only if $V\setminus U$ is an independent set.
Consider a minimum size vertex cover $U$ of $G,$ and a
test subcollection $ \{T_j : v_j \in U \} \cup \{ T'_i : i\in [q-1]\}$.
Observe that the latter is a test cover, since a pair $e_i,e'_j$ $(i\neq j)$ is separated by $T'_{\min\{i,j\}}$,
as are the pairs $e_i, e_j$ and $e'_i, e'_j$,
and a pair $e_i,e'_i$ is separated by
$T_j$ for some $v_j \in U$ such that $v_j \in e_i$. Such a $v_j$ exists since $U$ is a vertex cover.
% a $T_j$ such that $e_i \in T_j$. Such a $T_j$ exists since $U$ is a vertex cover.

A test cover must use all $T'_i$ as otherwise we cannot separate $e'_i,e'_q$ for some $i\neq q$. A test cover must also
use at least $|U|$ of $T_j$ tests. Suppose not, and consider the corresponding set $W$ of vertices, such that $|W|<|U|$.
 Then
every $e_i$ is separated from $e'_i$ by $T_j$ for some $v_j \in W$, and so
$W$ forms a vertex cover,
contradicting the minimality of $U$. Hence $G$ has a vertex cover of size $t$ if and only if
there is a test cover of size $q-1+t$.

The number of tests is $M=q-1+p$, and so there is a test cover of size $M-k=q-1+p-k$ if and only if $G$ has
an independent set with at least $k$ vertices. Since $k$-{\sc Independent Set} is W[1]-hard, $(m-k)$-{\sc Test Cover} is
W[1]-hard as well.

To prove that $(m-k)$-{\sc Test Cover} is in W[1], we will use the following reduction of {\sc Test Cover} to {\sc Set Cover}
by Moret and Shapiro \cite{MorSha85}. Consider an instance of {\sc Test Cover} with set $[n]$ of items  and set ${\cal T}=\{T_1,\ldots ,T_m\}$ of tests.
The corresponding instance of {\sc Set Cover} has ground set $V=\{ (i,j):\ 1\le i<j\le n\}$ and set collection
 $\{S_q: q \in [m]\}$, where $S_q=\{(i,j)\in V:\ T_q \mbox{ separates } i,j\}$.
% $\{S_q=\{(i,j)\in V:\ T_q \mbox{ separates } i,j\}$ : $q\in [m]\}$
% $S_q=\{(i,j)\in V:\ T_q \mbox{ separates } i,j\}$, $q\in [m]$.
 Observe that the instance of {\sc Test Cover} has a test cover of size $\mu$ if and only if
the corresponding instance of {\sc Set Cover} has a set cover of size $\mu$. It is proved in \cite[Theorem 4]{GutJonYeo11} that $(m-k)$-{\sc Set Cover} is in W[1]. Hence,
$(m-k)$-{\sc Test Cover} is in W[1] as well. This completes the proof.
\qed \end{proof}

\section{Complexity of $(n-k)$-{\sc Test Cover}}\label{sec:n0k}
In this section we prove that $(n-k)$-{\sc Test Cover} is fixed-parameter tractable. Towards this we first introduce an
equivalence relation on $[n]$.

Given a subcollection ${\cal T}'\subseteq {\cal T}$, and two items $i,j\in [n]$, $i\neq j$ we write that
$i\equiv_{{\cal T}'} j$, if $i,j$ is not separated by any tests in ${\cal T}'$. Clearly, $\equiv_{{\cal T}'}$
is an equivalence relation on $[n]$. Essentially, each equivalence class is a maximal set $C \subseteq [n]$
such that no pair $i,j \in C$ is separated by a test in ${\cal T}'$; we say that $C$ is a
{\em class induced by } ${\cal T}'$. Observe that ${\cal T}'$ is a test cover if and only if each class induced by ${\cal T}'$ is a singleton, i.e.,
there are exactly $n$ classes induced by ${\cal T}'$.

% For a subcollection ${\cal T}'\subseteq {\cal T}$,
% we say that a set $C \subseteq [n]$ is a {\em class induced by } ${\cal T}'$ if $C$ is a maximal set such that
% no pair $i,j \in C$ is separated by a test in ${\cal T}'$. Note that the classes induced by ${\cal T}'$
% form a partition of $[n]$ and every test cover induces $n$ classes.

\subsection{$k$-Mini Test Cover}\label{sec:mini}

To solve $(n-k)$-{\sc Test Cover} we first introduce a ``miniaturized'' version of the problem, namely,
the {\sc $k$-Mini Test Cover} problem. Here, we are given a set  $[n]$ of items
and a collection ${\cal T}=\{T_1,\ldots , T_m\}$ of tests.
As with {\sc Test Cover}, we assume that ${\cal T}$ is a test cover.
 We say that a subcollection ${\cal T}'\subseteq {\cal T}$ is a {\em $k$-mini test cover} if $|{\cal T}'|\le 2k$ and the number of classes induced by ${\cal T}'$ is at least $|{\cal T'}|+k$. We say a test $T$ {\em separates} a set $S$ if there exist $i,j \in S$ such that
$T$ separates $i,j$. Our main goal in this subsection is to show that   the $(n-k)$-{\sc Test Cover} problem and the
{\sc $k$-Mini Test Cover} problem are equivalent. Towards this we first show the following lemma.

\begin{lemma}\label{lem:oldClaim}
Suppose that ${\cal T}$ is a test cover for $[n]$, ${\cal F} \subseteq {\cal T}$ and the number of classes induced by
$\cal F$ is at least $|{\cal F}|+k$.
%Then, in time polynomial in $n+|{\cal T}|$,
Then ${\cal F}$ can be extended to a test cover of size at most $n-k$.
Moreover, if ${\cal T}$ contains all singletons, this is possible by adding only singletons.
\end{lemma}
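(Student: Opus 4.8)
The plan is to extend $\mathcal{F}$ greedily, adding one test at a time so that each addition strictly increases the number of induced classes, and then to bound the number of additions by a simple counting argument. First I would record two basic facts. Let $c$ denote the number of classes induced by $\mathcal{F}$, so that by hypothesis $c \geq |\mathcal{F}| + k$; since the classes partition $[n]$ we also have $c \leq n$. The second fact is monotonicity of refinement: if $\mathcal{G} \subseteq \mathcal{G}'$, then every class induced by $\mathcal{G}'$ is contained in a class induced by $\mathcal{G}$, because a pair unseparated by $\mathcal{G}'$ is certainly unseparated by the smaller collection $\mathcal{G}$. In particular, adding tests to a collection never decreases the number of induced classes.

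Next I would run the following extension procedure starting from $\mathcal{G} := \mathcal{F}$. While $\mathcal{G}$ is not a test cover, some induced class $C$ has $|C| \geq 2$; pick distinct $i, j \in C$. Since $\mathcal{T}$ is a test cover, there is a test $T \in \mathcal{T}$ separating $i, j$, and I add it to $\mathcal{G}$. This separates $i$ from $j$, so the class $C$ splits into at least two pieces while all other classes can only split further; hence the number of induced classes increases by at least one. The procedure terminates precisely when $\mathcal{G}$ induces $n$ classes, i.e. when every class is a singleton, which by the observation already recorded in the text is exactly the condition for $\mathcal{G}$ to be a test cover.

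For the size bound I would count the additions. The procedure starts with $c$ classes and ends with $n$ classes, and each step increases the class count by at least one, so at most $n - c$ tests are added. Therefore the resulting test cover has size at most $|\mathcal{F}| + (n - c) \leq |\mathcal{F}| + n - (|\mathcal{F}| + k) = n - k$, as required.

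Finally, for the \emph{moreover} claim I would observe that when $\mathcal{T}$ contains all singletons, the separating test chosen at each step may be taken to be a singleton: if $i$ lies in a non-singleton class, then $\{i\} \in \mathcal{T}$ separates $i$ from every other item, hence from its class-mates, again raising the class count by at least one. Running the procedure with this choice extends $\mathcal{F}$ to a test cover of size at most $n - k$ using only singletons. I do not expect a serious obstacle here: the whole argument rests on the single observation that a useful test raises the number of classes by at least one, together with the fact that $\mathcal{T}$, being a test cover, always supplies such a test until we are done. The only point requiring a little care is the monotonicity of refinement, which guarantees that these gains are never undone by later additions.
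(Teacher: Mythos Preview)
Your proposal is correct and follows essentially the same approach as the paper: greedily add tests that separate some non-singleton class until every class is a singleton, then bound the number of additions by $n - c \le n - (|\mathcal{F}|+k)$. The paper's proof is terser but identical in substance, including the choice of $T = \{i\}$ for the singleton case.
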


\begin{proof}
Add tests from ${\cal T}$ to $\cal F$ one by one such that each test increases the number of classes induced by $\cal F$, until the number of classes is $n$.
This can be done, since if we have less than $n$ classes, there is a class $C$ containing at least two items. For $i,j \in C$
there exists a test $T$ in ${\cal T} \setminus {\cal F}$ that separates $i,j$ which may be added to $\cal F$.
If we are only permitted to add singletons, then pick $T=\{i \}$.
Let ${\cal F}'$ be the subcollection produced from $\cal F$ in this way.
%and note that ${\cal F}'$ can be obtained in time polynomial in $n+|{\cal T}|$ by, e.g., maintaining the current list of classes.
Observe that ${\cal F}'$ is a test cover. Since $\cal F$ induces at least $|{\cal F}| + k$ classes, we need to add at most $n-(|{\cal F}|+k)$ tests
to produce ${\cal F}'$. Thus $|{\cal F}'| \le n-k$,
%$|{\cal F}'| \le |{\cal F}|+n-|{\cal F}|-k = n-k$,
as required.
\qed \end{proof}

We now define the  notion of a {\em $C$-test} as follows.

\begin{definition}
Let $C\subseteq [n]$. A test $S \in {\cal T}$ is a $C$-test if
$C \setminus S \not= \emptyset$ and $S \cap C \not= \emptyset$
(i.e. $S$ separates $C$).
 We also define the {\em local} portion of a $C$-test $S$ as
$L(S) = C \cap S$ and the {\em global} portion $G(S) = S \setminus C$.
\end{definition}

%Now we prove a converse of Lemma~\ref{lem:oldClaim}.

In order to prove Theorem~\ref{lem:minitest} below we need the following greedy algorithm.

\begin{center}
\fbox{~\begin{minipage}{0.909\textwidth}
\hspace{-0.09cm}{\bf Greedy-mini-test(${\cal T}$):}

\vspace{0.07cm}

\mbox{ }\hspace{0.2cm}\begin{minipage}{0.94\textwidth}
Start with ${\cal F}=\emptyset$. Add two tests
$T_i, T_j$ from ${\cal T}$ to ${\cal F}$ if this will increase the number of classes induced by ${\cal F}$
by at least $3$. Add a test $T_i$ from ${\cal T}$ to ${\cal F}$ if this will increase the number of classes
induced by $\cal F$ by at least $2$. Stop the construction if we reach $|{\cal F}| \geq 2k-2$.
\end{minipage}
\end{minipage}~}
\end{center}

\begin{lemma}\label{lem:greedyAlgorithm}
 If the algorithm Greedy-mini-test produces a set ${\cal F}$ with $|{\cal F}| \geq 2k-2$, then ${\cal F}$
is a $k$-mini test cover.
\end{lemma}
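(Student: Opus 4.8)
The plan is to verify directly the two defining conditions of a $k$-mini test cover, namely that $|{\cal F}| \le 2k$ and that the number of classes induced by ${\cal F}$ is at least $|{\cal F}|+k$. The size bound is immediate from the stopping rule: the construction continues only while $|{\cal F}| < 2k-2$, so just before the final insertion we have $|{\cal F}| \le 2k-3$, and since a single insertion step adds at most two tests, the returned collection satisfies $2k-2 \le |{\cal F}| \le 2k-1 \le 2k$. So the first condition holds, and the real content is the lower bound on the number of classes.

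For the second condition I would use a simple amortization over the insertion steps. Write $c({\cal F})$ for the number of classes induced by ${\cal F}$, and at each step $t$ record the increase $\delta_t\in\{1,2\}$ in $|{\cal F}|$ and the increase $\gamma_t$ in the number of classes. The two insertion rules are calibrated precisely so that $\gamma_t \ge \delta_t+1$ at every step: a single test is added only when it creates at least $2=\delta_t+1$ new classes, and a pair of tests is added only when together they create at least $3=\delta_t+1$ new classes. Starting from ${\cal F}=\emptyset$, which induces a single class, and letting $s$ denote the number of insertion steps performed, this yields
$$ c({\cal F}) \;=\; 1 + \sum_t \gamma_t \;\ge\; 1 + \sum_t(\delta_t+1) \;=\; 1 + |{\cal F}| + s, $$
using $\sum_t\delta_t=|{\cal F}|$ and $\sum_t 1 = s$.

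It then remains to bound $s$ from below. Since each step increases $|{\cal F}|$ by at most two, we have $|{\cal F}| \le 2s$, and combining this with the stopping bound $|{\cal F}| \ge 2k-2$ gives $s \ge |{\cal F}|/2 \ge k-1$. Substituting into the displayed inequality produces $c({\cal F}) \ge 1 + |{\cal F}| + (k-1) = |{\cal F}| + k$, which is exactly the second condition. Together with $|{\cal F}| \le 2k$ this shows that ${\cal F}$ is a $k$-mini test cover.

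I expect the only delicate point to be the step-counting inequality $s \ge k-1$, which is where the factor-two discrepancy between a double insertion's size cost ($\delta_t=2$) and its guaranteed class gain ($\gamma_t\ge 3$) must be absorbed. Double insertions are the worst case for the potential $c({\cal F})-|{\cal F}|$, and the thresholds $3$ and $2$ in the algorithm are chosen exactly so that each step, of either type, contributes a net surplus of at least one class over tests; an execution consisting entirely of double insertions still just reaches the required surplus of $k$. Everything else is routine bookkeeping around the base case $c(\emptyset)=1$ and the stopping condition.
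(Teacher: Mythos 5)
Your proof is correct and follows essentially the same route as the paper: the amortized inequality $c({\cal F})\ge 1+|{\cal F}|+s\ge \lceil \tfrac{3}{2}|{\cal F}|\rceil+1$ is precisely the invariant the paper states (``throughout Greedy-mini-test we have at least $\lceil\frac{3}{2}|{\cal F}|\rceil+1$ classes''), and the size bound $|{\cal F}|\le 2k-1$ is obtained identically from the stopping rule. You have merely made the induction behind the paper's one-line observation explicit.
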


\begin{proof}
Observe that throughout Greedy-mini-test we have at least $\lceil\frac{3}{2} |{\cal F}|\rceil + 1$ classes, and when $|{\cal F}| \geq 2k-2$ then
$\lceil \frac{3}{2} |{\cal F}| \rceil + 1 \geq |{\cal F}| + k$. 
By construction we note that $|{\cal F}| \leq 2k-1 < 2k$, which implies that ${\cal F}$ is a $k$-mini test cover.~\qed
\end{proof}

\begin{theorem}\label{lem:minitest}
Suppose that ${\cal T}$ is a test cover for $[n]$. Then ${\cal T}$ contains a test cover of size at most $n-k$
if and only if ${\cal T}$ contains a $k$-mini test cover.
\end{theorem}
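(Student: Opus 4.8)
The plan is to prove both implications, with the reverse one being immediate and the forward one carrying all of the work.

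For the ($\Leftarrow$) direction, suppose ${\cal T}$ contains a $k$-mini test cover ${\cal F}$. By definition the number of classes induced by ${\cal F}$ is at least $|{\cal F}|+k$, so Lemma~\ref{lem:oldClaim} applies verbatim and extends ${\cal F}$ to a test cover of size at most $n-k$. This direction needs nothing further.

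For the ($\Rightarrow$) direction, I would fix a test cover ${\cal C}\subseteq{\cal T}$ with $|{\cal C}|\le n-k$ and run Greedy-mini-test, but restricting the tests it is allowed to pick to members of ${\cal C}$; this keeps the produced collection ${\cal F}$ inside ${\cal C}$, which will matter below. The invariant established in the proof of Lemma~\ref{lem:greedyAlgorithm} does not depend on where the tests are drawn from, so if the run ever reaches $|{\cal F}|\ge 2k-2$ then that lemma already certifies ${\cal F}$ as a $k$-mini test cover and we are done. Hence the only case left is that the algorithm halts because it is \emph{stuck}, i.e. with $|{\cal F}|\le 2k-3<2k$ and with no admissible single test or pair remaining.

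The heart of the argument is to analyse this stuck state, and this is the step I expect to be the main obstacle. Being unable to add a single test means every test of ${\cal C}$ separates at most one class induced by ${\cal F}$ (a test separating $t$ classes raises the class count by exactly $t$). Being unable to add a pair means that whenever two tests $S,S'$ of ${\cal C}$ separate the same class $C$, they cannot create four cells inside $C$; in the language of the $C$-test definition, their local portions $L(S),L(S')$ are nested, disjoint, or together cover $C$. I claim this structure forces the following quantitative statement: if we add the tests of ${\cal C}\setminus{\cal F}$ to ${\cal F}$ one at a time, each addition increases the number of induced classes by at most one. The reason is that a test can only refine the single class it separates, and among pairwise non-crossing bipartitions of a fixed set a newly added one splits at most one currently existing cell — if it split two cells, those two cells would already be separated by some earlier test, and that earlier test together with the new one would create a crossing, contradicting the structure just derived.

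Granting this, the conclusion is short. Since ${\cal F}\subseteq{\cal C}$ and ${\cal C}$ is a test cover, after inserting all $|{\cal C}|-|{\cal F}|$ remaining tests we reach all $n$ singleton classes; as each insertion adds at most one class, the number of classes already induced by ${\cal F}$ is at least $n-(|{\cal C}|-|{\cal F}|)$. Hence (number of classes)$\,-\,|{\cal F}|\ge n-|{\cal C}|\ge k$, and together with $|{\cal F}|<2k$ this shows ${\cal F}$ is a $k$-mini test cover, completing the forward direction.
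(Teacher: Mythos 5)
Your proof is correct and follows essentially the same route as the paper's: run Greedy-mini-test inside the given cover of size at most $n-k$, dispose of the case $|{\cal F}|\ge 2k-2$ via Lemma~\ref{lem:greedyAlgorithm}, and in the stuck case use the two structural consequences of termination (each remaining test separates at most one induced class, and $C$-tests are pairwise non-crossing on $C$) to conclude that each remaining test adds at most one class. The only cosmetic difference is that you do the final count directly to certify ${\cal F}$ itself as a $k$-mini test cover, whereas the paper packages the same inequality as a contradiction with the small cover inducing $n$ classes; your justification of the ``at most one new cell per added test'' step is, if anything, slightly more explicit than the paper's.
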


\begin{proof}
First suppose that ${\cal T}$ contains a $k$-mini test cover $\cal F$. Then by Lemma \ref{lem:oldClaim},
$\cal F$ can be extended to a test cover of size at most $n-k$.

Conversely, suppose ${\cal T}$ contains a test cover ${\cal F}'$ of size at most $n-k$. Now use algorithm 
Greedy-mini-test on ${\cal F}'$. If $|{\cal F}| \geq 2k-2$, where ${\cal F}$ is produced by Greedy-mini-test, 
then we are done by Lemma~\ref{lem:greedyAlgorithm},
so assume that $|{\cal F}| < 2k-2$. This implies that the following holds, as otherwise the algorithm wouldn't have terminated when it did.

\begin{enumerate}
 \item For every test $T_i \in {\cal F}' \backslash {\cal F}$, $T_i$ does not separate more than one class induced by ${\cal F}$.
 \item For every class $C$ induced by ${\cal F}$, and for every pair $T_i, T_j$ of $C$-tests in
${\cal F}'\backslash {\cal F}$, at least one of $(T_i\cap T_j)\cap C$, $(T_i \backslash T_j)\cap C$, $(T_j \backslash T_i)\cap C$ and $C \backslash(T_i \cup T_j)$ is empty.
\end{enumerate}

It can be seen that these properties hold even if we add one extra
test from ${\cal F}'\backslash {\cal F}$ to ${\cal F}$.

Therefore if we add $t$ tests from ${\cal F}' \backslash {\cal F}$, one at a time, this will subdivide a class $C$ into at most $t+1$ classes.
Furthermore, since each test separates at most one class, adding $t$ tests from ${\cal F}'\backslash {\cal F}$
to ${\cal F}$ will increase the number of classes induced by ${\cal F}$ by at most $t$.
It follows that ${\cal F}'$ induces less than $|{\cal F}|+k + |{\cal F}'\backslash {\cal F}| = |{\cal F}'|+k \le n$ classes. But this is a contradiction as ${\cal F}'$ is a test cover.
\qed \end{proof}

By Theorem \ref{lem:minitest} we get the following result, which allows us to concentrate on {\sc $k$-Mini Test Cover} in the next subsection.

\begin{corollary}\label{lem:equiv}
The problem $(n-k)$-{\sc Test Cover} is FPT if and only if {\sc $k$-Mini Test Cover} is FPT.
\end{corollary}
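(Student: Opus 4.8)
The plan is to observe that Theorem~\ref{lem:minitest} already does essentially all the work, so the corollary follows from a trivial parameterized reduction, namely the identity map on instances. First I would note that the two problems take exactly the same input, a test cover ${\cal T}$ on $[n]$, and exactly the same parameter $k$. The decision version of $(n-k)$-{\sc Test Cover} asks whether ${\cal T}$ contains a test cover of size at most $n-k$, while {\sc $k$-Mini Test Cover} asks whether ${\cal T}$ contains a $k$-mini test cover. By Theorem~\ref{lem:minitest}, for every instance $({\cal T},k)$ these two questions have the same answer. Hence the map $({\cal T},k)\mapsto({\cal T},k)$ is a parameterized reduction from either problem to the other: it is computable in polynomial time, it leaves the parameter unchanged, and it preserves the {\sc Yes}/{\sc No} classification.

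Next I would invoke closure of FPT under such reductions directly. Concretely, suppose {\sc $k$-Mini Test Cover} admits an algorithm of runtime $f(k)\cdot poly(n,m)$. Then to decide $(n-k)$-{\sc Test Cover} on input $({\cal T},k)$ we simply run that algorithm on the same input and return its output; correctness is immediate from Theorem~\ref{lem:minitest}, and the runtime bound is unchanged. The converse direction is identical, with the roles of the two problems swapped. This establishes both implications and hence the stated equivalence.

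The only point needing care is that the two problems share both instance and parameter, so that the reduction is genuinely the identity and in particular does not blow up the parameter $k$; this is guaranteed by the exact phrasing of Theorem~\ref{lem:minitest}, which quantifies over the same ${\cal T}$ and the same $k$ in both directions. There is no substantive obstacle here: the entire conceptual content of the equivalence resides in Theorem~\ref{lem:minitest} (and the auxiliary Lemmas~\ref{lem:oldClaim} and~\ref{lem:greedyAlgorithm}), and this corollary is a purely formal consequence. Accordingly the remaining task, carried out in the next subsection, is to exhibit an FPT algorithm for {\sc $k$-Mini Test Cover}, which by the above transfers immediately to $(n-k)$-{\sc Test Cover}.
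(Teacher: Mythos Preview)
Your proposal is correct and follows exactly the paper's approach: the corollary is stated as an immediate consequence of Theorem~\ref{lem:minitest}, with no further argument given beyond the sentence ``By Theorem~\ref{lem:minitest} we get the following result.'' Your write-up simply spells out the obvious identity reduction that the paper leaves implicit.
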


\subsection{Main Result}\label{sec:n-k}
We start with the following easy observation.
\begin{lemma}\label{lem:assumeSig}
 Let ${\cal T}$ be a test cover. Let ${\cal T^*}$ be the test cover formed from ${\cal T}$ by adding every singleton not already in ${\cal T}$.
 Then ${\cal T^*}$ has a $k$-mini test cover if and only if ${\cal T}$ also has a $k$-mini test cover.
\end{lemma}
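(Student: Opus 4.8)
The plan is to prove both directions of the biconditional by relating $k$-mini test covers in $\mathcal{T}$ to those in $\mathcal{T}^*$. Since $\mathcal{T} \subseteq \mathcal{T}^*$, one direction is essentially immediate: any $k$-mini test cover $\mathcal{F} \subseteq \mathcal{T}$ is also a subcollection of $\mathcal{T}^*$, and the two defining properties of a $k$-mini test cover ($|\mathcal{F}| \le 2k$ and at least $|\mathcal{F}|+k$ induced classes) depend only on $\mathcal{F}$ itself, not on the ambient collection. So if $\mathcal{T}$ has a $k$-mini test cover, then $\mathcal{T}^*$ trivially does too.

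The substantive direction is the converse: from a $k$-mini test cover $\mathcal{F}^* \subseteq \mathcal{T}^*$, I want to produce one lying entirely in $\mathcal{T}$. The idea is to replace each singleton test $\{i\} \in \mathcal{F}^* \setminus \mathcal{T}$ by a genuine test of $\mathcal{T}$ that separates item $i$ in a comparable way. Since the original $\mathcal{T}$ is a test cover, for each such singleton $\{i\}$ there is some test $T \in \mathcal{T}$ with $i \in T$; the natural move is to swap $\{i\}$ for such a $T$. First I would argue that a singleton $\{i\}$ in a $k$-mini test cover contributes at most the separation of $i$ from everything else, i.e. it splits off $i$ as its own class relative to the other tests, so its marginal contribution to the class count is at most $1$. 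Replacing it by a test $T \ni i$ can only maintain or increase the number of induced classes, since $T$ separates $i$ from every item outside $T$ (in particular it still separates $i$ from at least the items the singleton separated it from, modulo the structure of $T$).

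The hard part will be controlling the class count after the substitution: swapping a singleton for a larger test could, in principle, fail to separate certain pairs that the singleton did separate, or could merge classes even as it splits others. The cleanest way to handle this is to perform the replacement greedily, one singleton at a time, and at each step choose $T \in \mathcal{T}$ to be a test separating a currently-merged pair inside the relevant class, invoking the fact that $\mathcal{T}$ is a test cover to guarantee such a $T$ exists. I would maintain the invariant that the number of induced classes never decreases during the replacement process, so the final collection $\mathcal{F} \subseteq \mathcal{T}$ still induces at least $|\mathcal{F}^*| + k = |\mathcal{F}| + k$ classes (the size is preserved, since each swap is one-for-one), while $|\mathcal{F}| \le 2k$ is inherited directly. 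An alternative, and perhaps slicker, route is to avoid a pair-by-pair argument entirely: take the $k$-mini test cover $\mathcal{F}^*$ in $\mathcal{T}^*$, apply Lemma~\ref{lem:oldClaim} to extend it to a test cover of $\mathcal{T}^*$ of size at most $n-k$, then observe that the extension may use singletons but the combined collection still certifies, via Theorem~\ref{lem:minitest} applied to $\mathcal{T}$ together with the ``contains all singletons'' clause of Lemma~\ref{lem:oldClaim}, that $\mathcal{T}$ itself admits a test cover of size at most $n-k$ and hence a $k$-mini test cover. I expect the greedy single-swap argument to be the more robust and self-contained option, with the monotonicity-of-class-count invariant being the single claim that requires genuine verification.
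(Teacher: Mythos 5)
Your proof is correct, and its pivotal observation---that a singleton test's marginal contribution to the number of induced classes is at most one, since removing $\{i\}$ leaves every class not containing $i$ untouched and merges $i$'s class with at most one other class---is exactly the fact the paper's proof rests on. The difference is that the paper stops there: it simply deletes every singleton from the $k$-mini test cover $\mathcal{F}$ of $\mathcal{T}^*$, noting that each deletion reduces $|\mathcal{F}|$ by one and the class count by at most one, so the defining condition ``at least $|\mathcal{F}|+k$ classes'' (which only constrains the \emph{difference} between class count and size) is preserved, while $|\mathcal{F}|\le 2k$ can only improve. Your one-for-one replacement of each discarded singleton by a test of $\mathcal{T}$ separating a currently merged pair also works---the class count is restored at each step and the size is unchanged---but it is unnecessary machinery, and it forces you to argue about the existence and novelty of the replacement test, whereas pure deletion requires nothing. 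Two small cautions: your intermediate claim that for each item $i$ there is some $T\in\mathcal{T}$ with $i\in T$ is not guaranteed in general (take $n=2$ and $\mathcal{T}=\{\{1\}\}$), though your refined greedy formulation avoids relying on it; and your ``slicker'' alternative via Lemma~\ref{lem:oldClaim} and Theorem~\ref{lem:minitest} does not close as stated, because the extended test cover of size at most $n-k$ lives in $\mathcal{T}^*$ and may use singletons outside $\mathcal{T}$, so converting it into a certificate for $\mathcal{T}$ would require precisely the singleton-removal argument you are trying to bypass.
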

\begin{proof}
 Assume ${\cal T^*}$ has a $k$-mini test cover ${\cal F}$. Form ${\cal F}'$ from ${\cal F}$ by removing all singletons. For each singleton removed the number of classes decreases by at most one. Hence, ${\cal F}'$ induces at least $|{\cal F}'| + k$ classes, and $|{\cal F}'|\le 2k$. Thus, ${\cal F}'$ is a $k$-mini test cover for ${\cal T}$.
The other direction is immediate since ${\cal T} \subseteq {\cal T^*}$.
\qed \end{proof}
\noindent
Due to Lemma \ref{lem:assumeSig}, \emph{hereafter we assume that every singleton belongs to ${\cal T}$.}

\smallskip

We will apply the algorithm Greedy-mini-test to find a collection
${\cal F} \subseteq {\cal T}$ of tests.
If $|{\cal F}| \geq 2k-2$ then we are done by Lemma~\ref{lem:greedyAlgorithm}, so for the rest of the arguments we assume that $|{\cal F}| < 2k-2$.
By construction, adding any new test to ${\cal F}$ increases the number of classes by at most $1$
and adding any two new tests to ${\cal F}$ increases the number of classes by at most $2$. 
Let the classes created by ${\cal F}$ be denoted by $C_1,C_2,\ldots,C_l$.
Note that $l \le 3k-2$.

\begin{lemma}\label{lem:ciset}
Any test $S \in {\cal T}\setminus {\cal F}$ cannot be a $C_i$-test and a $C_j$-test for $i \not= j$.
\end{lemma}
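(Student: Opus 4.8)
The plan is to argue by contradiction, exploiting the fact that the greedy algorithm Greedy-mini-test has already terminated with $|{\cal F}| < 2k-2$. Recall that, as noted immediately before the lemma, this termination guarantees that adding any single test from ${\cal T}\setminus {\cal F}$ to ${\cal F}$ increases the number of induced classes by at most $1$; otherwise the algorithm would have added such a test. So I would suppose, for contradiction, that some $S \in {\cal T}\setminus {\cal F}$ is simultaneously a $C_i$-test and a $C_j$-test for distinct indices $i,j$, and then show that inserting $S$ into ${\cal F}$ raises the number of classes by at least $2$.

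First I would examine what happens when $S$ is added to ${\cal F}$. Adding a test never merges classes: the relation $\equiv_{{\cal F}\cup\{S\}}$ refines $\equiv_{{\cal F}}$, since each existing class $C$ is split only according to membership in $S$, into $C\cap S$ and $C\setminus S$. Because $S$ is a $C_i$-test, both $C_i\cap S$ and $C_i\setminus S$ are nonempty, so $C_i$ splits into exactly two pieces, a net gain of one class. The identical reasoning applied to $C_j$ yields a second net gain of one. The crucial point, and essentially the only thing requiring care, is that the two contributions are independent: since $C_i$ and $C_j$ are distinct equivalence classes they are disjoint, so the new pieces carved out of $C_i$ are different from those carved out of $C_j$, and the two increases simply add. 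Hence the single test $S$ increases the number of induced classes by at least $2$.

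This contradicts the established bound that adding one new test to ${\cal F}$ increases the number of classes by at most $1$, which completes the argument. I do not expect a genuine obstacle here, as the statement is a clean consequence of the greedy construction; the lemma really just records that the partition of $[n]$ into $C_1,\ldots,C_l$ is ``locally rigid'' with respect to single tests outside ${\cal F}$. The only subtlety worth making explicit is the disjointness observation above, which prevents us from double-counting a single new class against both $C_i$ and $C_j$; this is immediate because distinct equivalence classes partition $[n]$, but it is the hinge on which the whole counting rests and should be stated plainly.
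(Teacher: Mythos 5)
Your proof is correct and follows exactly the paper's argument: assume $S$ separates both $C_i$ and $C_j$, observe that adding $S$ to ${\cal F}$ would then increase the number of induced classes by at least $2$, and derive a contradiction with the termination condition of Greedy-mini-test. The extra detail you supply (classes only refine, and distinct classes are disjoint so the two gains add) is a sound expansion of the same one-line argument.
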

\begin{proof}
For the sake of contradiction, assume such a test $S$ exists. Then adding $S$ to $\cal F$ will increase the number of classes
by at least $2$, a contradiction to the definition of $\cal F$.
\qed \end{proof}

% Thus
% By Lemma \ref{lem:ciset},
We may assume without loss of generality in the rest of this section that for all $C_i$-tests, $S$,
we have $|S \cap C_i| \leq |C_i|/2$.
Indeed, suppose $|S \cap C_i| > |C_i|/2$. Then we may replace $S$ in ${\cal T}$ with the test $S' = [n] \setminus S$. Observe that two items are separated by $S'$ if and only if they are separated by $S$, and so replacing $S$ with $S'$ produces an equivalent instance.
Furthermore, since $|S \cap C_i|>|C_i|/2$ we have that $|S'\cap C_i|\le |C_i|/2$.
Note that Lemma \ref{lem:ciset} still holds after replacing $S$ with $S'$, since for all $j \neq i$ either $S'\cap C_j= \emptyset$ or $C_j \subseteq S'$.

% Furthermore, since $|S \cap C_j| = 0$ or $|S \cap C_j| = |C_j|$ for all $j \neq i$ and $|C_i| > |S \cap C_i| > |C_i|/2$, we have that $S'$ is a $C_i$-set with $|S' \cap C_i|<|C_i|/2$.
% (otherwise replace $S$ in ${\cal T}$ with the set $[n] \setminus S$).
% {\bf Saket: I did not understand this.... does not this require justification..}

\begin{lemma} \label{strucL}
Any two $C_i$-tests $S,S' \in {\cal T}$ have either $L(S) \subseteq L(S')$ or $L(S') \subseteq L(S)$ or $L(S) \cap L(S') = \emptyset$.
\end{lemma}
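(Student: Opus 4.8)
The plan is to set $A = L(S) = C_i \cap S$ and $B = L(S') = C_i \cap S'$. Since $S$ and $S'$ are $C_i$-tests, both $A$ and $B$ are nonempty proper subsets of $C_i$ (nonempty because each meets $C_i$, proper because $C_i\setminus S,\,C_i\setminus S'\neq\emptyset$). The tests $S$ and $S'$ cut $C_i$ into the four regions $A\cap B$, $A\setminus B$, $B\setminus A$, and $C_i\setminus(A\cup B)$, and the conclusion $L(S)\subseteq L(S')$ or $L(S')\subseteq L(S)$ or $L(S)\cap L(S')=\emptyset$ amounts to asserting that at least one of the first three regions is empty. So I would first prove the weaker fact that at least one of all four regions is empty, and then rule out the remaining bad configuration.

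For the first step I would invoke the fact that Greedy-mini-test halted with $|{\cal F}|<2k-2$. Any $C_i$-test separates a pair inside $C_i$, so it cannot belong to ${\cal F}$ (no test in ${\cal F}$ separates a pair inside a class it induces); hence $S,S'\in{\cal T}\setminus{\cal F}$. By Lemma~\ref{lem:ciset}, neither $S$ nor $S'$ separates any class other than $C_i$, so adding both of them to ${\cal F}$ leaves every class except $C_i$ intact and subdivides $C_i$ into exactly its nonempty regions. The increase in the number of induced classes therefore equals (number of nonempty regions) $-\,1$. Because the algorithm stopped, adding any two tests raises the class count by at most $2$; thus at most three of the four regions are nonempty, i.e. at least one of $A\cap B$, $A\setminus B$, $B\setminus A$, $C_i\setminus(A\cup B)$ is empty. (This is exactly the analogue of property~2 in the proof of Theorem~\ref{lem:minitest}, now available for arbitrary $C_i$-tests in ${\cal T}\setminus{\cal F}$.)

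It then remains to run the case distinction. If $A\cap B=\emptyset$, $A\setminus B=\emptyset$, or $B\setminus A=\emptyset$, we immediately read off $L(S)\cap L(S')=\emptyset$, $L(S)\subseteq L(S')$, or $L(S')\subseteq L(S)$, respectively. The only troublesome possibility is that the sole empty region is $C_i\setminus(A\cup B)$, that is $A\cup B=C_i$ with $A\cap B\neq\emptyset$ and neither set nested in the other; this is precisely the configuration that breaks laminarity. Here I would use the standing normalization $|S\cap C_i|\le|C_i|/2$ for every $C_i$-test, which gives $|A|,|B|\le|C_i|/2$. Inclusion--exclusion then yields $|C_i|=|A\cup B|=|A|+|B|-|A\cap B|\le|C_i|-|A\cap B|$, forcing $|A\cap B|=0$. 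So even in this fourth case we land in the disjoint alternative, and the proof is complete.

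I expect the crux to be this last case: the greedy stopping condition by itself delivers only ``some region is empty,'' which is too weak, since the empty region could be $C_i\setminus(A\cup B)$. Converting that weaker statement into genuine laminarity is exactly what the cardinality reduction $|S\cap C_i|\le|C_i|/2$ is there to accomplish, so the main obstacle is really the bookkeeping that ensures this normalization has been applied uniformly to all $C_i$-tests (which the paragraph preceding the lemma arranges) and that Lemma~\ref{lem:ciset} still holds after the complementation step.
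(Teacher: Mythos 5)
Your proof is correct and is essentially the paper's argument rearranged: the paper assumes laminarity fails, notes that then $C_i\cap(S\setminus S')$, $C_i\cap(S'\setminus S)$ and $C_i\cap S\cap S'$ are all nonempty, uses the same normalization $|L(S)|,|L(S')|\le|C_i|/2$ with inclusion--exclusion to show $C_i\setminus(S\cup S')$ is nonempty as well, and derives a contradiction with the stopping condition of Greedy-mini-test, exactly the two ingredients you identify. The only difference is presentational (contradiction versus your direct case split on which region is empty), so there is nothing substantive to add.
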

%  Can phrase in terms of L(S)?
\begin{proof}
For the sake of contradiction, assume $S,S'$ do not satisfy this condition. Then $C_i \cap (S \setminus S')$ is non-empty (otherwise, $L(S) \subseteq L(S')$).
Similarly, $C_i \cap (S' \setminus S)$ is non-empty. Since $L(S) \cap L(S') \neq \emptyset$,  $C_i \cap S \cap S'$ is non-empty. Finally observe that
$|L(S) \cup L(S')|=|L(S)|+|L(S')|-|L(S)\cap L(S')|\leq |C_i|/2+|C_i|/2-1 < |C_i|$. Hence $C_i \setminus (S \cup S')$ is non-empty.
Adding $S$ and $S'$ to $\cal F$ divides $C_i$ into four classes: $C_i \cap (S \setminus S')$, $C_i \cap (S' \setminus S)$, $C_i \cap (S \cap S')$ and $C_i \setminus (S \cup S')$.
This contradicts the maximality of $\cal F$.
\qed \end{proof}

An {\em
out-tree} $T$ is an orientation of a tree which has only one vertex of
in-degree zero (called the {\em root}); a vertex of $T$ of out-degree
zero is a {\em leaf}.

We now build an out-tree $O_i$ as follows. The root of the tree, $r\in V(O_i)$ corresponds to the set $C_i$. Each vertex $v \in V(O_i) \setminus {r}$ corresponds to a subset, $S_v \subseteq C_i$ such that there exists a $C_i$-test $S \in {\cal T}$ with $L(S)=S_v$.  Note that for a pair of vertices $u,v\in V(O_i)$ if $u \neq v$, then $S_u \neq S_v$. Add an arc from $v$ to $w$ in $O_i$ if
$S_w \subset S_v$ and there is no $u$ in $O_i$ with $S_w \subset S_u \subset S_v$.
% Each vertex in $O_i$ corresponds to a subset in $C_i$ and
By Lemma \ref{strucL} we note that $O_i$ is indeed an out-tree.

\begin{lemma} \label{degO}
  Every non-leaf in $O_i$ has out-degree at least two.
\end{lemma}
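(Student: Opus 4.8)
The plan is to argue by contradiction. Suppose some non-leaf vertex $v$ of $O_i$ has out-degree exactly one, with unique child $w$, so that $S_w \subsetneq S_v$ by the arc rule. First I note that a non-leaf $v$ must satisfy $|S_v| \ge 2$: a vertex whose set is a singleton has no nonempty proper subset, hence no child. The crucial resource is the standing assumption made after Lemma~\ref{lem:assumeSig} that every singleton belongs to ${\cal T}$. Using $|S_v|\ge 2$, I can pick an item $j \in S_v \setminus S_w$, which is nonempty since the containment is strict.

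Before exploiting this, I would record the standard description of descendants in $O_i$. By Lemma~\ref{strucL} the local portions of the $C_i$-tests form a laminar family, so for any two vertices $u,u'$ the sets $S_u,S_{u'}$ are either nested or disjoint, and the arcs of $O_i$ are precisely the cover relations of containment with root $C_i$ on top. From this one checks the characterization: $u$ is a proper descendant of $v$ if and only if $S_u \subsetneq S_v$. The forward direction is immediate from the arc rule. For the converse, every vertex whose set $Z$ satisfies $S_u \subseteq Z \subseteq S_v$ contains the nonempty set $S_u$, so any two such sets are nested; hence they form a chain $S_u = Z_0 \subsetneq Z_1 \subsetneq \cdots \subsetneq Z_m = S_v$ of consecutive covers, giving a directed path $v \leadsto u$.

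With this in hand the contradiction is short. Since all singletons are tests and $j \in C_i$ with $|C_i|\ge 2$, the singleton $\{j\}$ is a $C_i$-test, so some vertex $x$ of $O_i$ has $S_x = \{j\}$. As $\{j\} \subsetneq S_v$, the characterization makes $x$ a proper descendant of $v$; but $j \notin S_w$ gives $\{j\} \not\subseteq S_w$, so $x$ is neither $w$ nor a descendant of $w$. This is impossible when $w$ is the only child of $v$, since then every proper descendant of $v$ lies in the subtree rooted at $w$. Hence $v$ has at least two children, which is the claim.

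I expect the only delicate point to be the descendant characterization together with the recognition that the singleton assumption is genuinely required. Without it the lemma is false: a test with $L(S)=\{1,2\}$ and one with $L(S')=\{1\}$ would make the vertex for $\{1,2\}$ (and the root) a degree-one non-leaf, and this configuration is never forced into ${\cal F}$ by Greedy-mini-test, since adding both tests splits $C_i$ into only three classes and so increases the class count by just two. The singletons supply the bottom-level leaves that force every internal branching to be genuine, and the remainder is routine bookkeeping on the laminar poset.
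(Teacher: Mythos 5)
Your proof is correct and follows essentially the same route as the paper's: both rely on the standing assumption that all singletons are tests to produce, for an item $j\in S_v\setminus S_w$, a leaf of $O_i$ that is a descendant of $v$ but cannot lie in the subtree rooted at $w$. You merely make explicit, via the laminar-family/descendant characterization from Lemma~\ref{strucL}, the step the paper states in one line, namely that $\{j\}\subsetneq S_v$ forces a directed path from $v$ to that leaf avoiding $w$.
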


\begin{proof}
  Let $v$ be a non-leaf in $O_i$, and note that
$|S_v| \geq 2$. Let $w$ be any child of $v$ in $O_i$.
By definition there exists an item in $S_v \setminus S_w$ (as $|S_v|>|S_w|$), say $w'$.
As there is a singleton $\{w'\} \in {\cal T}$ there is a path from $v$ to $w'$ in $O_i$ and as $w' \not\in S_w$ the path does not use
$w$. Therefore $v$ has at least one other out-neighbour.
\qed \end{proof}

We now define the {\em signature} of a set $S' \subset C_i$ as follows.

$$Sig(S') = \{ G(S) :\  S \in {\cal T} \mbox{ and } L(S)=S' \}$$

\begin{lemma} \label{sig1}
 We have $|\{Sig(S') :\ S' \subset C_i \}| \leq 2^{2^{3k-1}}$.
%  In other words, there are at most $2^{2^{3k-1}}$ distinct signatures.
\end{lemma}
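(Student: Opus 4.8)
The plan is to bound the size of the \emph{universe} of global portions that can occur in any signature, and then observe that a signature is merely a subset of this universe, so the number of distinct signatures is at most $2$ raised to the size of the universe. The whole argument rests on showing that this universe is small: although a priori a global portion could be any subset of $[n]\setminus C_i$ (of which there are up to $2^{n-|C_i|}$), in fact it can only be a union of entire classes $C_j$ with $j\neq i$.

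First I would record that for a nonempty proper subset $S'$ of $C_i$, every test $S\in{\cal T}$ with $L(S)=S'$ is a $C_i$-test lying in ${\cal T}\setminus{\cal F}$: from $L(S)=C_i\cap S=S'$ with $\emptyset\neq S'\subsetneq C_i$ we get that $S$ separates $C_i$, and $S$ cannot lie in ${\cal F}$ because $C_i$ is a class induced by ${\cal F}$ and hence is separated by no test of ${\cal F}$. This is the step that licenses the use of the earlier structural lemma.

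The key step, which I expect to be the heart of the proof, is to show that each such global portion $G(S)=S\setminus C_i$ is a union of classes among $\{C_j : j\neq i\}$. By Lemma~\ref{lem:ciset}, a test in ${\cal T}\setminus{\cal F}$ that is a $C_i$-test is a $C_j$-test for no $j\neq i$; therefore $S\cap C_j\in\{\emptyset,C_j\}$ for every $j\neq i$, which gives $G(S)=\bigcup_{j:\,C_j\subseteq S}C_j$. Since there are only $l-1$ classes $C_j$ with $j\neq i$, and distinct subfamilies of these disjoint classes yield distinct unions, there are at most $2^{l-1}$ possible values of $G(S)$. This is exactly the collapse from exponential-in-$n$ down to a quantity controlled by $l$ that makes the lemma work.

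Finally, each signature $Sig(S')$ is a set of global portions and hence a subset of this universe of size at most $2^{l-1}$, so there are at most $2^{2^{l-1}}$ distinct signatures arising from nonempty proper $S'$; the two remaining choices $S'=\emptyset$ and $S'=C_i$ contribute at most two further signatures. Using the bound $l\le 3k-2$ established earlier, the total is at most $2^{2^{3k-3}}+2$, which is comfortably below $2^{2^{3k-1}}$, completing the proof. The only care needed is the edge-case bookkeeping for $S'\in\{\emptyset,C_i\}$, whose signatures need not be subsets of the class-union universe but each count only once and are absorbed by the slack in the exponent.
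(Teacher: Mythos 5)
Your proposal is correct and follows essentially the same route as the paper: both arguments use Lemma~\ref{lem:ciset} to show that the global portion of any relevant test must be a union of whole classes $C_j$ ($j\neq i$), bound that universe by $2^{l-1}\le 2^{3k-1}$, and then count signatures as subsets of it. Your explicit handling of the degenerate cases $S'=\emptyset$ and $S'=C_i$ is a minor extra care the paper omits, but it does not change the argument.
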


\begin{proof}
Let ${\cal S}_i$ denote all sets, $S$, with $C_j \cap S = \emptyset$ or $C_j \subseteq S$ for all $j$ and furthermore $C_i \cap S = \emptyset$. Note that
$|{\cal S}_i|\le 2^{l-1} \leq 2^{3k-1}$, since $|\{C_1,\ldots,C_l\}~\setminus~\{C_i\}|  \leq 3k-1$.
Note that tests $U$ and $V$ with $L(U)=S'=L(V)$ have $G(U) \not= G(V)$, as $U\neq V$. Observe that all
$G(S)$ in $Sig(S')$ belong to ${\cal S}_i$ implying that there is at most $2^{|{\cal S}_i|} = 2^{2^{3k-1}}$ different choices for a signature.
\qed \end{proof}

%{\bf Gabriele: I think that Lemma 8 does not hold for $S'=C_i$, since the global portions
%are not unions of classes $C_j$ in that case. Does this not affect Theorem 3?}

% {\bf Saket: The next theorem is very big. Let us subdivide into three lemmas, one for each item. They are more or less independent.  And making them even smaller makes it bit choppy... The only theorem should be the last one... }

% We now need the following important lemma.

% \begin{theorem} \label{thm:main}
%   There exists functions $f_1(k)$, $f_2(d,k)$ and $f_3(d,k)$, such that the following holds for all vertices $v \in O_i$, where
% $d$ is the length (i.e. number of arcs) of a longest path out of $v$ in $O_i$ (or we can reduce, such that this is the case).
% \begin{description}
%  \item[(1):] The depth of the tree $O_i$ (i.e. the number of arcs in a longest path out of the root) is at most $f_1(k)$.
%  \item[(2):] $N^+(v) \leq f_2(d,k)$
%  \item[(3):] $|S_v| \leq f_3(d,k)$
% \end{description}
% \end{theorem}

\begin{lemma} \label{lem:main}
  There exists a function $f_1(k)$ such that either
the depth of the tree $O_i$ (i.e. the number of arcs in a longest path out of the root) is at most $f_1(k)$,
or in polynomial time, we can find a vertex $v$ in $O_i$ such that if there is a solution to our instance of $(n-k)$-{\sc Test Cover} then there is also a solution that does not use any test $S$ with
$L(S)=S_v$.

%  (and so we can reduce the instance by deleting any such sets).
\end{lemma}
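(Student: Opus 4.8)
The plan is to choose $f_1(k)=2^{2^{3k-1}}$, the bound on the number of distinct signatures given by Lemma~\ref{sig1}, and to argue by pigeonhole along a long root path. Suppose the depth of $O_i$ exceeds $f_1(k)$ and fix a directed path $r=v_0,v_1,\dots,v_D$ from the root with $D>f_1(k)$. By the construction of $O_i$ the local portions strictly decrease along this path, $C_i=S_{v_0}\supsetneq S_{v_1}\supsetneq\cdots\supsetneq S_{v_D}$, and each of $S_{v_1},\dots,S_{v_D}$ is a proper subset of $C_i$ and so has a well-defined signature. Since $D>f_1(k)$ and, by Lemma~\ref{sig1}, at most $f_1(k)$ signatures exist, two vertices $v_a,v_b$ with $a<b$ satisfy $Sig(S_{v_a})=Sig(S_{v_b})$, and then $S_{v_b}\subsetneq S_{v_a}$. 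All signatures can be read off directly from ${\cal T}$, so such a pair is found in polynomial time.

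I would then set $v=v_a$ and show that tests with $L(S)=S_{v_a}$ are dispensable. Write $A=S_{v_a}\setminus S_{v_b}\neq\emptyset$. The equality $Sig(S_{v_a})=Sig(S_{v_b})$ means that every test $S$ with $L(S)=S_{v_a}$, $G(S)=g$ has a partner $S^{\ast}$ with $L(S^{\ast})=S_{v_b}$ and $G(S^{\ast})=g$, and one checks that $S^{\ast}=S\setminus A$. Start from a solution of minimum size (and, among these, one using as few tests of local portion $S_{v_a}$ as possible) and replace one such $S$ by $S^{\ast}$. Since $S$ and $S^{\ast}$ agree outside $A$, the only pairs whose separation can be affected are those having exactly one endpoint in $A$, and among these the only ones that could become unseparated are pairs $\{x,y\}$ with $x\in A$ and $y\notin S$. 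Thus everything reduces to showing that each such pair is still separated by some remaining test, which would contradict minimality and complete the argument.

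The main obstacle is exactly this verification. I would split it according to the location of $y$. If $y\notin C_i$, the pair crosses the boundary of $C_i$; here $S$ and $S^{\ast}$ have the same global portion $g$, so whatever separated $x$ from items outside $C_i$ still does, and nothing is lost. The delicate case is $y\in C_i\setminus S_{v_a}$, an internal pair of $C_i$: here I would use the laminar structure of the local portions (Lemma~\ref{strucL}), the fact that every non-leaf of $O_i$ has out-degree at least two (Lemma~\ref{degO}), and the presence of all singleton tests, to argue that such an internal pair is separated by a test whose local portion is not altered by the swap from $S_{v_a}$ to $S_{v_b}$. I expect this internal-pair analysis, where the interaction between the out-tree, laminarity, and the singletons must be controlled, to be the crux of the argument; once it is in place, the modified family is a test cover of no larger size that avoids local portion $S_{v_a}$, which yields the desired vertex $v=v_a$.
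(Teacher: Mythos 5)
Your pigeonhole setup (a long root path, repeated signatures, and swapping tests whose local portion is $S_{v_a}$ for same-global-portion partners elsewhere on the path) matches the paper's starting point, but the proposal has two genuine gaps, one of which is a false step rather than a missing one. First, the cross-boundary case is \emph{not} fine: for $x\in A=S_{v_a}\setminus S_{v_b}$ and $y\notin C_i$ with $y\notin G(S)$, the test $S$ separates $x,y$ (since $x\in S$, $y\notin S$) but $S^{\ast}=S\setminus A$ does not (now $x\notin S^{\ast}$ and still $y\notin S^{\ast}$). Having the same global portion does not help, because $x$ changes sides. This is exactly why the paper replaces one test by \emph{two}: one whose local portion lies \emph{above} $S^*$ on the path (it keeps separating each $x\in L(S^x)$ from each $y\notin G(S^x)$) and one whose local portion lies \emph{below} it (it keeps separating each $x\in C_i\setminus L(S^x)$ from each $y\in G(S^x)$). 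Second, the internal-pair case you flag as ``the crux'' genuinely cannot be closed in your framework: if the only test in the solution separating some $x\in A$ from some $y\in C_i\setminus S_{v_a}$ is $S$ itself, no appeal to laminarity or to singletons being present in ${\cal T}$ (they need not be present in the chosen solution) forces another separating test to exist, so the swapped family may simply fail to be a test cover.

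The paper escapes both problems by not trying to preserve the test cover at all. It works with the $k$-mini test cover surplus (number of induced classes minus number of tests being at least $k$): deleting the test $S^x$ loses at most one class inside $C_i$ (by the laminar structure of Lemma \ref{strucL}), while each of the two added tests gains at least one class inside $C_i$, so the surplus is preserved; Lemma \ref{lem:oldClaim} then rebuilds a full test cover of size at most $n-k$ by adding only singletons, which never have local portion $S^*$, yielding the contradiction with minimality. To guarantee that the two replacement tests really create new classes, their vertices must be ``uncolored'' with respect to a coloring of $O_i$ that marks at most $16k-2$ vertices, and this is why the paper needs $32k$ same-signature vertices on the path and hence takes $f_1(k)=(32k-1)2^{2^{3k-1}}$; your choice $f_1(k)=2^{2^{3k-1}}$ produces only two such vertices, which is too few to support any version of this argument.
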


\begin{proof}
% We first prove part (1).
Let $f_1(k)=(32k-1) 2^{2^{3k-1}}$.
 Assume that the depth of the tree $O_i$ is more than $f_1(k)$ and let $p_0 p_1 p_2 \ldots p_a$ be a longest path in $O_i$ (so $a > f_1(k)$).
By Lemma \ref{sig1} and by the choice of $f_1(k)$, there is a sequence $p_{j_1},p_{j_2},\ldots, p_{j_{32k}}$, where $1 \leq j_1 < j_2 < \cdots < j_{32k} \leq a$ and
all sets corresponding to $p_{j_1},p_{j_2},\ldots, p_{j_{32k}}$ have the same signature.

Let $S^*$ be the set  corresponding to $p_{j_{16k}}$.
We will show that if there is a solution to our instance of $(n-k)$-{\sc Test Cover} then there is also a solution that does not use any test $S$ with
$L(S)=S^*$.
% This would give us the desired reduction since we could therefore delete all such sets.

Assume that there is a solution to our instance of $(n-k)$-{\sc Test Cover} and assume that we pick a solution ${\cal T}'$ with as few tests, $S$, as possible
with $L(S)=S^*$. For the sake of contradiction assume that there is at least one test $S'$ in our solution with $L(S')=S^*$.
By Theorem \ref{lem:minitest} there is a $k$-mini test cover, ${\cal F}'$,  taken from ${\cal T}'$.
Initially let ${\cal F}'' = {\cal F}'$.
While there exists a vertex $r \in C_q$ and $r' \in C_p$ ($q \not= p$) which are not separated by ${\cal F}''$
then add any test from ${\cal F}$
which separates $r$ and $r'$ to ${\cal F}''$
(recall that ${\cal F}$
% was found by the greedy algorithm,
is the test collection found by Greedy-mini-test).
 Note that this increases the size of ${\cal F}''$ by $1$ but also increases the number of classes induced by ${\cal F}''$  by at least $1$.
We continue this process for as long as possible. As ${\cal F}'' \subseteq {\cal F} \cup {\cal F}'$ we note that $|{\cal F}''| \leq 2k + 2k = 4k$.
Furthermore, by construction, vertices in different $C_j$'s are separated by tests in ${\cal F}''$. Also note that the number of classes induced by
${\cal F}''$ is at least $|{\cal F}''|+k$ (as the number of classes induced by
${\cal F}'$ is at least $|{\cal F}'|+k$).

% By Lemma \ref{lem:oldClaim}, we note that some test, $S^x$, in ${\cal F}''$ has $L(S^x)=S^*$ (as otherwise extend ${\cal F}''$ by singletons
% % (using Lemma \ref{lem:oldClaim})
% to a test cover where no test, $S$, in the solution has $L(S)=S^*$,
% a contradiction to our assumption).
For every test, $S$, in ${\cal F}''$ color the vertex in $O_i$ corresponding to $L(S)$ blue.
For every vertex, $v \in V(O_i)$, color $v$ red if all paths from $v$ to a leaf in $O_i$ use at least
one blue vertex and $v$ is not already colored blue. Finally for every vertex, $w \in V(O_i)$, color $w$ orange if
all siblings of $w$ (i.e. vertices with the same in-neighbour as $w$) are colored blue or red and $w$ is not colored blue or red.
We now need the following:\\

\noindent{\bf Claim A:} The number of colored vertices in $O_i$ is at most $16k-2$.\\
{\em Proof of Claim A:}  As $|{\cal F}''| \leq 4k$ we note that the number of blue vertices is at most $4k$. We will now show that the number of red vertices is at most $4k-1$.
% $4k$ (in fact $4k-1$).
Consider the forest obtained from $O_i$ by only keeping arcs out of red vertices. Note that any tree in this forest has all its leaves colored
blue and all its internal vertices colored red. Furthermore, by Lemma \ref{degO} the out-degree of any internal vertex is at least $2$. This implies that
the number of red vertices in such a tree is less than the number of blue vertices. As this is true for every tree in the forest we conclude that the number of
red vertices in $O_i$ is less than the number of blue vertices in $O_i$ and is therefore bounded by $4k-1$.

We will now bound the number of orange vertices. Since every orange vertex in $O_i$ has at least one sibling colored blue or red
(by Lemma \ref{degO}).
and any blue or red vertex can have at most one orange sibling we note that the number of orange vertices cannot be more than the number of vertices
colored blue or red. This implies that the number
of orange vertices is at most $8k-1$.\\

% Recall that $S^x$ is a test in ${\cal F}''$ with $L(S^x)=S^*$.
By Lemma \ref{lem:oldClaim}, we note that some test, $S^x$, in ${\cal F}''$ has $L(S^x)=S^*$ (as otherwise extend ${\cal F}''$ by singletons
% (using Lemma \ref{lem:oldClaim})
to a test cover where no test, $S$, in the solution has $L(S)=S^*$,
a contradiction to our assumption).
Now create ${\cal F}^x$ as follows. Initially let ${\cal F}^x$ be obtained from ${\cal F}''$ by removing the test $S^x$.
Let $p_{j_{i'}}$ be an uncolored vertex in $\{p_{j_1},p_{j_2},\ldots, p_{j_{16k-1}}\}$
% (note that we do not pick $p_{j_1}$ as we do not want to pick the root of $O_i$)
and let
$p_{j_{i''}}$ be an uncolored vertex in $\{p_{j_{16k+1}},p_{j_{16k+2}},\ldots, p_{j_{32k-1}}\}$ (note that we do not pick $p_{j_{32k}}$).
Let $S_1^x$ be a test in ${\cal T}$ with $G(S_1^x)=G(S^x)$ and $L(S_1^x)$ corresponding to the vertex $p_{j_{i'}}$
and let $S_2^x$ be a test in ${\cal T}$ with $G(S_2^x)=G(S^x)$ and $L(S_2^x)$ corresponding to $p_{j_{i''}}$.
These tests exist as the signature of all sets corresponding to vertices in $p_{j_1},p_{j_2},\ldots, p_{j_{32k}}$ are the same.
Now add $S_1^x$ and $S_2^x$ to ${\cal F}^x$. The following now holds.

\smallskip

\noindent{\bf Claim B:} The number of classes induced by
${\cal F}^x$ is at least $|{\cal F}^x|+k$.

\noindent{\em Proof of Claim B:} Let $u,v \in [n]$ be arbitrary. If $u,v \not\in C_i$ and they
are separated by ${\cal F}''$, then they are also separated by ${\cal F}^x$, as if
they were separated by $S^x$ then they will now be separated by $S_1^x$ (and $S_2^x$). Now assume that $u \in C_i$ and $v \not\in C_i$. If $u \in L(S^x)$ and $u$ and $v$
were separated by $S^x$ then they are also separated by $S_1^x$. If $u \not\in L(S^x)$ and $u$ and $v$
were separated by $S^x$ then they are also separated by $S_2^x$. So as $u$ and $v$ were separated by ${\cal F}''$ we note that they are also separated by ${\cal F}^x$.
We will now show that the number of classes completely within $C_i$ using ${\cal F}^x$ is at least one larger than when using ${\cal F}''$.

By
% the construction of $F$
Lemma \ref{strucL}
we note that deleting $S^x$ from ${\cal F}''$ can decrease the number of classes within $C_i$ by at most one (it may decrease the number of classes
in $[n]$ by more than one). We first show that adding the test $S_1^x$ to ${\cal F}'' \setminus \{S^x\}$ increases the number of classes within $C_i$ by at least one.
As $p_{j_{i'}}$ is not colored there is a path from $p_{j_{i'}}$ to a leaf, say $u_1$, without any blue vertices. Furthermore as $p_{j_{i'}}$ is not orange we note that it has a
sibling, say $s'$, that is not colored and therefore has a path to a leaf, say $u_2$, without blue vertices. We now note that $u_1$ and $u_2$ are not separated in ${\cal F}''$ (and therefore
in ${\cal F}'' \setminus \{S^x\}$). However adding the test $S_1^x$ to ${\cal F}'' \setminus \{S^x\}$ does separate $u_1$ and $u_2$ (as $u_1 \in S_1^x$ but $u_2 \not\in S_1^x$). Therefore
the classes within $C_i$ has increased by at least one by adding  $S_1^x$ to ${\cal F}'' \setminus \{S^x\}$.

Analogously we show that adding the test $S_2^x$ to ${\cal F}'' \cup \{S_1^x\} \setminus \{S^x\}$ increases the number of classes within $C_i$ by at least one.
As $p_{j_{i''}}$ is not colored there is a path from $p_{j_{i''}}$ to a leaf, say $v_1$, without blue vertices. Furthermore as $p_{j_{i''}}$ is not orange we note that it has a
sibling, say $s''$, that is not colored and therefore has a path to a leaf, say $v_2$, without blue vertices. We now note that $v_1$ and $v_2$ are not separated in ${\cal F}''$ (and therefore
in ${\cal F}'' \cup \{S_1^x\} \setminus \{S^x\}$, as $p_{j_{i'}}$ lies higher in the tree $O_i$ and therefore the test $S_1^x$ does not separate $u$ and $v$).
However adding the test  $S_2^x$ to ${\cal F}'' \cup \{S_1^x\} \setminus \{S^x\}$ does separate $v_1$ and $v_2$
(as $v_1 \in S_2^x$ but $v_2 \not\in S_2^x$). Therefore
the classes within $C_i$ has increased by at least one by adding  $S_2^x$ to ${\cal F}'' \cup \{S_1^x\} \setminus \{S^x\}$.
So we conclude that the number of classes within $C_i$ has increased by at least one and as any vertex not in $C_i$ is still separated from exactly the same vertices in ${\cal F}^x$ as
it was in ${\cal F}''$ we have proved Claim B.
\\

By Lemma \ref{lem:oldClaim} and Claim B we get a solution with fewer tests, $S$, with $L(S)=S^*$, a contradiction.
\qed \end{proof}

Suppose the depth of $O_i$ is greater than $f_1(k)$, and let $S^*$ be the set found by the above lemma.
Then we can delete all tests, $S$, with $L(S)=S^*$ from ${\cal T}$ without changing the problem, as if there is a solution for the instance then there is one that does not contain any test $S$ with $L(S)=S^*$.
Therefore we may assume that the depth of $O_i$ is at most $f_1(k)$.

\begin{lemma} \label{lem:main2}
  There exist functions $f_2(d,k)$ and $f_3(d,k)$, such that in polynomial time we can reduce $([n],{\cal T}, k)$ to an instance
such that the following holds for all vertices $v \in O_i$, where
$d$ is the length (i.e. number of arcs) of a longest path out of $v$ in $O_i$:
% (or we can reduce, such that this is the case):
(1) $N^+(v) \leq f_2(d,k)$ and (2) $|S_v| \leq f_3(d,k)$.
\end{lemma}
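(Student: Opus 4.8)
The plan is to prove the two bounds simultaneously by induction on $d$, the length of a longest path out of $v$ in $O_i$, processing the vertices of every tree $O_i$ from the leaves upwards. Every reduction rule we apply will only delete items or tests, so it can only shrink the trees $O_i$ and never increase any subtree height; hence this bottom-up order is consistent, and it suffices to define $f_2,f_3$ recursively in $d$. For the base case $d=0$ the vertex $v$ is a leaf, so $N^+(v)=0$; and since every singleton belongs to ${\cal T}$ (Lemma \ref{lem:assumeSig}), any set $S_v$ with $|S_v|\ge 2$ would strictly contain the vertex $\{x\}$ for $x\in S_v$ and thus have a descendant, so a leaf satisfies $|S_v|=1$. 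We therefore set $f_2(0,k)=0$ and $f_3(0,k)=1$.

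The easy half of the inductive step is to deduce the set-size bound from the out-degree bound. For a non-leaf $v$, each item $x\in S_v$ gives a vertex $\{x\}\subsetneq S_v$ of $O_i$ (as $\{x\}\in{\cal T}$), which is a descendant of $v$; since by Lemma \ref{strucL} the local portions of distinct children of $v$ are pairwise disjoint, $x$ lies in $S_w$ for exactly one child $w$. Hence $S_v=\bigsqcup_{w\in N^+(v)}S_w$, so $|S_v|=\sum_{w}|S_w|\le N^+(v)\cdot f_3(d-1,k)$, where each $|S_w|\le f_3(d-1,k)$ by the inductive hypothesis. Thus once the bound $N^+(v)\le f_2(d,k)$ is established we may set $f_3(d,k):=f_2(d,k)\cdot f_3(d-1,k)$, and part (2) follows immediately.

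The main work is the out-degree bound, and the key point is that there are only boundedly many \emph{types} of child subtree. Each child $w$ has height at most $d-1$, so by induction its subtree has all out-degrees at most $f_2(d-1,k)$ and therefore at most some number $M(d-1,k)$ of vertices, and by Lemma \ref{sig1} every vertex carries a signature from a set of size at most $2^{2^{3k-1}}$. Viewing each child subtree as a tree whose vertices are labelled by their signatures, it realises one of at most $N(d-1,k)$ isomorphism types, a quantity depending only on $d-1$ and $k$. We define $f_2(d,k):=ck\cdot N(d-1,k)$ for a suitable constant $c$. If $N^+(v)>f_2(d,k)$, then by pigeonhole more than $ck$ children of $v$ share a single type; the reduction rule deletes all items in the subtree of one such redundant child, removing it from $O_i$ and lowering $N^+(v)$ by one. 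We repeat this until the bound holds.

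The hard part will be proving that this deletion preserves the answer to {\sc $k$-Mini Test Cover}. One direction is routine: restoring deleted items to the universe can never decrease the number of classes induced by a fixed subcollection (if a restored item were unseparated from members of two different classes, those two members would be unseparated, a contradiction), so any $k$-mini test cover of the reduced instance is still one in the original. The difficulty is the converse: given a $k$-mini test cover ${\cal F}'$ of the original instance, we must produce one avoiding all items of the deleted subtree. Here I expect to mirror the exchange argument of Lemma \ref{lem:main}: the at most $2k$ tests of ${\cal F}'$ (together with ${\cal F}$) ``touch'' only boundedly many of the type-identical children, so having more than $ck$ of them --- with $c$ chosen to match the blue/red/orange colouring budget of Claim~A --- guarantees an untouched child whose identical signatures let us reroute every test whose local portion lies in the deleted subtree onto a surviving copy, keeping the number of induced classes at least $|{\cal F}'|+k$ and the size at most $2k$, exactly as in Claim~B. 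Carrying out this exchange cleanly, and pinning down the constant $c$ together with the exact values of $M(d-1,k)$ and $N(d-1,k)$, is where essentially all of the difficulty resides; the base case, the disjoint-union computation for $|S_v|$, and the monotonicity of the class count under adding items are all straightforward by comparison.
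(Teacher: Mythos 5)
Your overall strategy coincides with the paper's: induction on the height $d$, base case from the presence of all singletons, part (2) deduced from part (1) via the disjoint decomposition $S_v=\bigcup_{w\in N^+(v)}S_w$ with $f_3(d,k)=f_2(d,k)\cdot f_3(d-1,k)$, and part (1) by pigeonholing the children of $v$ into boundedly many signature-labelled isomorphism types of subtrees and deleting one redundant copy together with its tests and items. The counting details differ only cosmetically (the paper bounds the size of each child subtree by $2f_3(d-1,k)-1$ using Lemma \ref{degO} rather than via the out-degree bound), and your constant $c$ can be taken just above $2$: since a $k$-mini test cover ${\cal F}'$ has at most $2k$ tests, among $2k+1$ strongly isomorphic children at least one is untouched by ${\cal F}'$, and the rerouting onto that copy is exactly the paper's argument. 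You do not need the red/orange machinery of Claim~A of Lemma \ref{lem:main} here at all; only the ``blue'' vertices coming from ${\cal F}'$ matter.

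The genuine gap is in the validity of the deletion rule, and it is not quite where you place the difficulty. Rerouting the tests of ${\cal F}'$ whose local portions lie in the deleted subtree $Q_1$ onto an untouched strongly isomorphic copy is indeed routine (the shared signatures guarantee replacement tests with the same global portions exist, and by symmetry the class count over $[n]$ is unchanged). What rerouting does \emph{not} address is that the rule also deletes the \emph{items} of $S_{w_1}$: any class induced by ${\cal F}'$ that lies entirely inside $S_{w_1}$ vanishes from $[n']$, and if that happened the surviving collection could induce fewer than $|{\cal F}'|+k$ classes. The paper closes this with a separate step: after rerouting, no test of ${\cal F}'$ has its local portion at a vertex of $Q_1$ or of some other uncoloured sibling subtree $Q_{j'}$, so by the laminar structure (Lemma \ref{strucL}) every $C_i$-test of ${\cal F}'$ whose local portion meets $S_{w_1}$ in fact contains all of $S_{w_1}\cup S_{w_{j'}}$; hence no test of ${\cal F}'$ separates any item of $S_{w_1}$ from any item of $S_{w_{j'}}$, so no induced class is contained in $S_{w_1}$ and the class count over $[n']$ does not drop. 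Without this (or an equivalent) argument the reduction is not justified, so you should add it explicitly rather than fold it into the rerouting.
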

\begin{proof}
% We will now consider part (2) and (3).
  Let $v$ be a vertex in $O_i$ and let $d$ be the length of a longest path out of $v$ in $O_i$. We will prove the lemma by induction on $d$.
If $d=0$ then $v$ is a leaf in $O_i$ and $N^+(v)=0$ and $|S_v| =1$ (as all singletons exist in ${\cal T}$). So now assume that $d \geq 1$ and
the lemma holds for all smaller values of $d$. We note that the way we construct $f_3(d,k)$ below implies that it is increasing in $d$.
% (as we just make sure that $f_3(d,k)$ is greater than $f_3(d-1,k)$ in the below steps).

We will first prove part (1). Let $N^+(v) = \{w_1,w_2,w_3, \ldots , w_b\}$ and note that $|S_{w_j}| \leq f_3(d-1,k)$ for all $j=1,2,\ldots,b$ (by
induction and the fact that $f_3(d,k)$ is increasing in $d$).
Let $Q_j$ be the subtree of $O_i$ that is rooted at $w_j$ for all $j=1,2,\ldots,b$. As part (1) holds for all vertices in $Q_j$ we note that
there are at most $g(d,k)$ non-isomorphic trees in $\{Q_1,Q_2,\ldots,Q_b\}$ for some function $g(d,k)$.
% (COMMENT: we can either give a good or a rough estimate of this).
Furthermore the number of vertices in each $Q_j$ is bounded by $2 f_3(d-1,k) -1$ by Lemma \ref{degO} and induction (using part (2) and the
fact that every leaf in $Q_j$ corresponds to a singleton in $C_i$ and the number of leaves are therefore bounded by $ f_3(d-1,k)$).
By Lemma \ref{sig1} the number of distinct signatures is bounded by $2^{2^{3k-1}}$. Let $f_2(d,k)$ be defined as follows.

$$ f_2(d,k) = 2k \cdot  g(d,k) \left[ 2^{2^{3k-1}} \right]^{2f_3(d-1,k)-1} $$

So if $b> f_2(d,k)$ there exists at least $2k+1$ trees in  $\{Q_1,Q_2,\ldots,Q_b\}$ which are strongly isomorphic, in the sense that a one-to-one mapping from one to the other
maintains arcs as well as signatures (a vertex with a given signature is mapped into a vertex with the same signature).
Without loss of generality assume that $Q_1$ is one of these at least $2k+1$ trees.
We now remove all vertices in $Q_1$ as well as all tests $S$ with $L(S)$ corresponding to a vertex in $Q_1$. Delete all the items in $S_{w_1}$.
%  - that is, delete all items from the sets corresponding to vertices in $Q_1$.
% , and the items of $[n]$ contained in these tests.
Let the resulting test collection be denoted by ${\cal T'}$, and denote the new set of items by $[n']$.
% To ensure that ${\cal T'}$ is still a test cover, we identify together any set of items not separated by a test in ${\cal T}'$, and include the resulting single item in a singleton test. Denote the new set of items by $[n']$.
% This is the reduction, which we will show is valid in the following claim.
We show this reduction is valid in the following claim.\\

\noindent{\bf Claim:} This reduction is valid (i.e. $([n] ,{\cal T}, k)$ and $([n'], {\cal T'}, k)$ are equivalent).

\noindent{\em Proof of Claim:}
Observe that any $k$-mini test cover in ${\cal T}'$ is a $k$-mini test cover in ${\cal T}$, and so $([n], {\cal T}, k)$ is a {\sc Yes}-instance if $([n'],{\cal T}', k)$ is a {\sc Yes}-instance.

For the converse, assume ${\cal T}$ contains a $k$-mini test cover ${\cal F}'$, and for each test $S$ in ${\cal F}'$, color the vertex in $O_i$ corresponding to $L(S)$ blue.
We first show we may assume $Q_1$ is uncolored. For suppose not, then since $|{\cal F}'|\le 2k$, then some other tree $Q_j$ that is strongly isomorphic to $Q_1$ is uncolored. In this case, we may replace the tests $S$ in ${\cal F}'$ with $L(S)$ corresponding to a vertex in $Q_1$, by the equivalent tests $S'$ with $L(S')$ corresponding to a vertex in $Q_j$.

So assume $Q_1$ is uncolored. Then ${\cal F}'$ is still a subcollection in ${\cal T'}$. It remains to show that ${\cal F}'$ still induces
at least $|{\cal F}'|+k$ classes over $[n']$.
Observe that this holds unless there is some class $C$ induced by $|{\cal F}'|$ that only contains items from $S_{w_1}$. But this can only happen if some item in $S_{w_1}$ is separated from $S_{w_j}$ by a test in ${\cal F}'$, for all $j \in \{2, \dots b\}$. But since $b > |{\cal F}|+1$, there exists $j \neq 1$ such that $Q_j$ is not coloured. Then since $w_1, w_j$ are siblings, no test in ${\cal F}'$ can separate $S_{w_1}$ from $S_{w_j}$. Thus ${\cal F}'$ induces at least $|{\cal F}'|+k$ classes over $[n']$, and so
${\cal F}'$ is still a $k$-mini test cover in the new instance.
%
%
% Observe that $Q_j$ is uncolored for some $j \neq 1$. Let $x,y$ be items in $S_{w_1}, S_{w_j}$ respectively. Observe that since $w_1, w_j$ are siblings and $Q_1, Q_j$ are not colored, there is no test in ${\cal F}'$ that separates $x$ and $y$. Therefore for any non-empty class ${\cal C}$ induced by ${\cal T}$ over $[n]$, ${\cal C}$ is also a non-empty class induced by ${\cal T}'$ over $[n']$ (since the only items removed are those from $S_{w_1}$, and if ${\cal C}$ contains an item from $S_{w_1}$ then it also contains an item from $S_{w_j}$). So ${\cal F}'$ is still a $k$-mini test cover in the new instance.
%
% Note that since $|{\cal F}'| \le 2k$, there is another uncolored tree $Q_j$ for some $j \neq 1$.
% Observe that
%
% Furthermore, observe that ${\cal F}'$ induces as many classes in the reduced instance as in the original instance, as we only identified together items which were not separated by any test in ${\cal T}'$. Therefore ${\cal T}'$ also contains a $k$-mini test cover.
Thus, $( [n'],{\cal T}',k)$ is a {\sc Yes}-instance if and only if $([n], {\cal T}, k)$ is a {\sc Yes}-instance.
\\

By the above claim we may assume that $b \leq f_2(d,k)$, which proves part (1).
 We will now prove part (2).
As we have just proved that $b \leq f_2(d,k)$ and $|S_{w_j}| \leq f_3(d-1,k)$ for all $j=1,2,\ldots,b$, we note that
(2) holds with $f_3(d,k) = f_3(d-1,k) \times f_2(d,k)$.
\qed \end{proof}

\begin{theorem}
 The $(n-k)$-{\sc Test Cover} problem is fixed-parameter tractable.
\end{theorem}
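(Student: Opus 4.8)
The plan is to assemble the structural results of this subsection into a kernelization for \textsc{$k$-Mini Test Cover}; the theorem then follows from Corollary \ref{lem:equiv}, which makes $(n-k)$-\textsc{Test Cover} FPT precisely when the miniaturized problem is. First I would invoke Lemma \ref{lem:assumeSig} to assume every singleton lies in $\mathcal{T}$, and then run Greedy-mini-test. If it returns a collection $\mathcal{F}$ with $|\mathcal{F}| \geq 2k-2$, then by Lemma \ref{lem:greedyAlgorithm} we have already exhibited a $k$-mini test cover and answer \textsc{Yes}. Otherwise $|\mathcal{F}| < 2k-2$, and $\mathcal{F}$ partitions $[n]$ into classes $C_1,\ldots,C_l$ with $l \leq 3k-2$.

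Next, for each class $C_i$ I would build the out-tree $O_i$ and apply the two reduction lemmas. Lemma \ref{lem:main} lets us delete in polynomial time, whenever the depth of $O_i$ exceeds $f_1(k)$, all tests whose local portion equals a distinguished set $S^*$, without changing the answer; iterating this over all $i$ forces the depth of every $O_i$ down to at most $f_1(k)$. With the depth bounded, Lemma \ref{lem:main2} then bounds, in polynomial time, the out-degree of each vertex by $f_2(d,k)$ and the size of each associated set by $f_3(d,k)$, where $d \leq f_1(k)$ is the length of the longest path out of that vertex.

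The payoff is a bound on $n$. Since the root of $O_i$ corresponds to $C_i$ itself, part (2) of Lemma \ref{lem:main2} gives $|C_i| = |S_r| \leq f_3(f_1(k),k)$, a function of $k$ alone. As the $l \leq 3k-2$ classes partition $[n]$, we obtain $n \leq (3k-2)\,f_3(f_1(k),k) =: h(k)$. Once $n$ is bounded by a function of $k$, the number of distinct tests is at most $2^{h(k)}$, so we can decide whether a $k$-mini test cover exists by brute force over all subcollections of size at most $2k$, checking the class count $|\mathcal{F}'|+k$ for each, in time depending only on $k$. Every reduction strictly decreases the number of items or tests, so only polynomially many reductions are performed, and the whole procedure runs in FPT time.

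I expect the main obstacle to be the bookkeeping around iterating these reductions: deleting the items of a child set shrinks the class $C_i$ and hence the tree $O_i$, so I must verify that such deletions neither revive a previously eliminated deep path nor spoil the stopping condition of Greedy-mini-test, and that the per-class applications of Lemmas \ref{lem:main} and \ref{lem:main2} do not interfere across distinct classes. Because item deletions can only merge classes and shrink trees, and because a child set $S_{w_1}$ is always a \emph{proper} subset of $C_i$ so that no class is emptied, the depths and the count $l$ never increase; this monotonicity is what should make the termination and equivalence arguments go through cleanly.
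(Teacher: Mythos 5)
Your proposal follows essentially the same route as the paper's proof: run Greedy-mini-test, answer \textsc{Yes} if it succeeds, otherwise use Lemma \ref{lem:main} to bound the depth of each $O_i$ by $f_1(k)$ and Lemma \ref{lem:main2} to bound each $|C_i|$ by $f_3(f_1(k),k)$, so that $n$ is bounded by a function of $k$ and brute force finishes. The additional care you take about iterating the reductions and about the threshold $2k-2$ is sound and, if anything, slightly more precise than the paper's own write-up.
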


\begin{proof}
Given a test cover, construct a subcollection ${\cal F}\subseteq {\cal T}$ using algorithm Greedy-mini-test. If $|{\cal F}|\ge 2k$, the instance is a {\sc Yes}-instance since ${\cal F}$ induces at least $\frac{3}{2}|{\cal F}|$ classes.
Otherwise, $|{\cal F}|< 2k$ and ${\cal F}$ induces at most $3k$ classes.
By Lemma \ref{lem:main}, we may assume that $O_i$ has depth at most $d=f_1(k)$, and by Lemma \ref{lem:main2} part (2) we may assume that $|C_i| \le f_3(d,k)$, for each class $C_i$ induced by ${\cal F}$.
Thus $|C_i| \leq f_3(f_1(k),k)$.

Hence there are at most $3k$ classes, the size of each bounded by a function of $k$, so the number of items in the problem is bounded by a function of $k$. Thus, the problem can be solved by an algorithm of runtime depending on $k$ only.
\qed
 \end{proof}

\section{Conclusion}\label{sec:con}

We have considered four parameterizations of {\sc Test Cover}  and established their parameterized complexity.
The main result is fixed-parameter tractability of  $(n-k)$-{\sc Test Cover}. Whilst it is a positive result, the runtime of the algorithm that we can obtain is not practical and we hope that subsequent improvements of our result can bring down the runtime to a practical level. Ideally, the runtime should be $c^k(n+m)^{O(1)}$, but it is not always possible \cite{LokMarSau11}.
%For three parameterizations, the complexity was not hard to obtain and it followed from either known results or was not hard to prove. It was much harder to show fixed-parameter tractability of the parameterization $(n-k)$-{\sc Test Cover}. Whilst it is a positive result, the runtime of the algorithm that we can obtain is not practical and we hope that subsequent improvements of our result can bring down the runtime to a practical level. Ideally, the runtime should be $c^k(n+m)^{O(1)}$, but it is not always possible \cite{LokMarSau11}.

\paragraph{Acknowledgment}
This research was partially supported by an International Joint grant of Royal Society.

\end{document}